\newcommand{\Zp}{{\mathbb Z_p}}
\newcommand{\Ftwo}{{\mathbb F_2}}
\newcommand{\Reals}{{\mathbb R}}
\newcommand{\xor}{\oplus}
\newcommand{\GOOD}{{\mathcal{GOOD}}}
\newcommand{\lsb}{{\textnormal{lsb}}}
\newcommand{\Sample}{\textit{Sample}}
\newcommand{\Sign}{\textit{Sign}}
\newcommand{\SAMPLE}{\textit{SAMPLE}}
\newcommand{\drop}[1]{}
\newcommand{\floor}[1]{\lfloor {#1} \rfloor}
\newcommand{\Var}{{\sf Var}}
\newcommand{\E}{{\sf E}}
\newcommand{\req}[1]{(\ref{#1})}
\newtheorem {lemma} {Lemma}
\newtheorem {observation} [lemma] {Observation}
\newtheorem {theorem}[lemma] {Theorem}
\newtheorem {proposition}[lemma] {Proposition}
\newcommand\eps\varepsilon
\newcommand\fct\rightarrow
\newcommand\ceil[1]{\lceil {#1}\rceil}
\newcommand\ol\overline
\newcommand{\old}[1]{}
\begin{document}
\title{\texttt{Sample(x)=(a*x<=t)} is a distinguisher with probability 1/8}

\author{Mikkel Thorup\thanks{Partly supported by 
Advanced Grant DFF-0602-02499B from the Danish Council for Independent
Research under the Sapere Aude research career programme.
}\\
University of Copenhagen\\
Copenhagen, Denmark.\\
Email: \texttt{mikkel2thorup@gmail.com}}

\maketitle

\begin{abstract}
A random sampling function $\Sample:U\fct\{0,1\}$ for a key universe $U$
is a {\em distinguisher with probability $\alpha$\/} if for any given assignment
of values $v(x)$ to the keys $x\in U$, including at least one non-zero
$v(x)\neq0$, the sampled sum $\sum\{v(x)\mid x\in U \wedge \Sample(x)=1\}$
is non-zero with probability at least $\alpha$. Here the key values
may come from any commutative monoid (addition is commutative
and associative and zero is neutral). Such distinguishers were
introduced by Vazirani [PhD thesis 1986], and Naor and Naor used them
for their small bias probability spaces [STOC'90]. Constant
probability distinguishers are used for testing in contexts where the
key values are not computed directly, yet where the sum is easily
computed. A simple example is when we get a stream of key value pairs
$(x_1,v_1),(x_2,v_2),\ldots,(x_n,v_n)$ where the same key may appear
many times. The accumulated value of key $x$ is $v(x)=\sum\{v_i\mid
x_i=x\}$. For space reasons, we may not be able to maintain $v(x)$ for every key
$x$, but the sampled sum is easily maintained as the single value $\sum\{v_i\mid
\Sample(x_i)=1\}$.

Here we show that when dealing with $w$-bit integers, if $a$ is a
uniform odd $w$-bit integer and $t$ is a uniform $w$-bit integer, then
$\Sample(x)=[ax \mod 2^w\leq t]$ is a distinguisher with probability
1/8. Working with standard units, that is, $w=8, 16, 32,
64$, we exploit that $w$-bit multiplication works modulo $2^w$,
discarding overflow automatically, and then the sampling decision is
implemented by the C-code \texttt{a*x<=t}. Previous such samplers were
much less computer friendly, e.g., the distinguisher of Naor and Naor
[STOC'90] was more complicated and involved a 7-independent hash
function.\bigskip
\end{abstract}

\section{Introduction}
A random sampling function $\Sample:U\fct[2]=\{0,1\}$ for a set $U$ of
keys is viewed as sampling the keys $x\in U$ with $\Sample(x)=1$.  It is a {\em
  distinguisher with probability $\alpha$\/}, or, for short, an {\em
  $\alpha$-distinguisher}, if for any commutative monoid $R$ (addition is commutative
and associative and zero is neutral) and any given assignment of values
$v(x)\in R$ to the keys $x\in U$, including at least one non-zero $v(x)\neq
0$, the {\em sampled sum\/}
\[\sum\{v(x)\mid x\in U\wedge \Sample(x)=1\}\]
is non-zero with probability at least $\alpha$. Such distinguishers were
introduced by Vazirani \cite{u_vazirani86thesis}, and Naor and Naor used them
for their small bias probability spaces \cite{NaorN93}.  As an example, with
$R=\Ftwo$, the goal is to sample an odd number of elements from the 
set $X$ of elements $x\in U$ with $v(x)=1$. This is very
different from the more typical goal in sampling (see, e.g., \cite{CG89}) where
we with some small sampling probability $p$ hope to sample roughly $p|X|$
elements from $X$.

To simplify notation, we often identify the sampling function $\Sample$ 
with the set $\Sample^{-1}(1)$ of keys it samples from $U$, that is,
$x\in \Sample\iff \Sample(x)=1$. Thinking of $\Sample$ as a random subset of $U$
is often simpler, but our goal is to implement $\Sample$
as a membership function, returning $1$, or ``true'', for the elements in the subset.
Also, for any key set $X\subseteq U$,
we define the multiset $v(X)=\{v(x)\mid x\in X\}$. Then the sampled sum is written
as $\sum v(\Sample)$. Finally, we define a {\em non-zero value function\/}
$v:U\fct R$ as one that has a non-zero value $v(x)$ for at least one $x\in U$.

It is easy to see that a fully random sampling function $\Sample:U\fct[2]$ is 
a distinguisher with probability $1/2$; for consider
any non-zero value function $v:U\fct R$. We can make the independent sampling decisions $\Sample(x)$, $x\in U$,
one by one. When
we get to the last non-zero valued $x$ in $U$, if the sum over the previous
samples is non-zero, the final sampled sum is non-zero if $x$ is not
sampled, which happens with probability $1/2$. On the other hand, if
the sum over the previous
samples is zero, the final sampled sum is non-zero if $x$ is included,
which happens with probability $1/2$. Thus, if the
sample of $x$ is independent of the previous samples, then
we do get a distinguisher with probability $1/2$. The best possible
distinguisher samples a uniformly random non-empty set, that is,
we re-run the fully random sampling if no samples are made. It 
is a distinguisher with probability $1/2+1/2^u$ where $u=|U|$. Fully 
random sampling functions are, however, not realistic for
large sets of keys.

As a framework for more realistic hash functions, Wegman and Carter
\cite{wegman81kwise} define a random hash function
$h:U\fct[m]=\{0,\ldots,m-1\}$ to be {\em $k$-independent\/} if it is truly
random when restricted to any $k$ distinct keys $x_1,\ldots,x_k$, that
is, for any $t_1,\ldots,t_k\in[m]$, $\Pr[\bigwedge_{i=1}^k
  h(x_i)=t_i]=1/m^k$. 
Unfortunately, $k$-independent 
sampling does not in itself suffice for distinguishers unless we
have full independence with $k=u$. To see this, consider the constant 
value
function $v:U\fct\Ftwo$ where $v(x)=1$ for all $x\in U$. The
sampled sum over $\Ftwo$ is zero exactly if we sample an
even number of keys. Making a uniformly random such sample
is $(u-1)$-independent; for it can be described
by taking a fully random sample from any $u-1$ given keys, and
then make the sampling decision for the last key so that
the total number of samples is even. This
$(u-1)$-independent sampling is a distinguisher with probability $0$.

Thus $k$-independent sampling functions cannot be used directly as
distinguishers, but the concept is still useful in the design of
distinguishers. The previous constant probability distinguisher
defined by \cite[\S 3.1.1, pp. 843--846]{NaorN93} uses, as a subroutine,
a 7-independent hash function, and it is a distinguisher with probability
1/8.

In this paper we present some very simple and efficient constant
probability distinguishers.  First we have a generic construction
based on 2-independent hashing.  Let $h:U\fct[m]$ be a 2-independent hash
function. Moreover let the ``threshold'' $t$ be uniformly distributed
in $[m]$. Define $\Sample(x)=[h(x)\leq t]$. Assuming that the value
function $v:U\fct R$ has $n>0$ non-zeros, we will show that $\Pr[\sum
  v(\Sample)\neq 0]\geq (1-n^2/m^2)/8$, so for $n\ll m$, $\Sample$ is
essentially a $1/8$-probability distinguisher.  

For a canonical example of a 2-independent hash function $h$, we
pick a prime $p$ such that $U\subseteq \Zp$, set $m=p$, and choose $a$
and $b$ uniformly in $\Zp$. Then $h(x)=(ax+b)\mod p$ is 2-independent,
so $\Sample(x)=[(ax+b\bmod p)\leq t]$ is a distinguisher with
probability $(1-n^2/p^2)/8$. Not surprisingly, the addition of $b$
is superfluous. In fact, if $a$ is uniform in
$[m]_+=\{1,\ldots,m-1\}$, we will prove that $\Sample(x)=[ax\bmod p \leq
  t]$ is a strict$1/8$-probability distinguisher. More efficient
2-independent schemes exist, see, e.g., \cite{dietzfel96universal},
but we will show something even more efficient inspired by  the extremely
fast universal hashing scheme of Dietzfelbinger et
al.~\cite{dietzfel97closest}.  The new scheme works for $w$-bit integers, 
that is, with key set $U=[2^w]$. Let $a$ be a uniform odd $w$-bit integer and let $t$ be a
uniform $w$-bit integer. We will show that $\Sample(x)=[ax \mod
  2^w\leq t]$ is a distinguisher with probability $1/8$. The analysis
for this result is, technically, the hardest part of this
paper. Working with standard units, that is, $w=8, 16, 32, 64$, we can
exploit that $w$-bit multiplication works modulo $2^w$, discarding
overflow automatically, and then $\Sample(x)$ is implemented by the
C-code \texttt{a*x<=t}. We could also use \texttt{a*x<t}, but this yields a less clean analysis and slightly weaker bounds.

\subsection{Applications}\label{sec:appl}
We will now discuss some applications of distinguishers. These
applications are all known, and could be based on the distinguisher
from \cite{NaorN93}. Our contribution is
a very efficient distinguisher that is easy to implement, thus
making distinguishers more attractive to use in applications.

Constant probability distinguishers are used for testing in contexts where the
key values are not computed directly, yet where the sum is easily
computed. A simple example is when we get a stream of key value pairs
$(x_1,v_1),(x_2,v_2),\ldots,(x_n,v_n)$ where the same key may appear
many times. The accumulated value of key $x$ is $v(x)=\sum\{v_i\mid
x_i=x\}$. For space reasons, we may not be able to maintain $v(x)$ for every key
$x$, but the sampled sum is easily maintained as the single value $\sum\{v_i\mid
\Sample(x_i)=1\}$.

Another example suggested by \cite{NaorN93} is to implement Freivald's 
technique \cite{Freivalds77} to verify a matrix product $AB=C$
for $n\times n$ matrices. We use a random sampling function
$\Sample:[n]\fct [2]$, which we view as a $0/1$-vector 
$\vec s=(\Sample(0),\ldots,\Sample(n-1))^\top$.
We now just test if $A(B\vec s) = C \vec s$. This is trivially done
in $O(n^2)$ time, and if $\Sample$ is an $\alpha$-distinguisher, then
we discover $C\neq AB$ with probability $\alpha$. In Freivald's 
original algorithm \cite{Freivalds77}, the vector $\vec r$ is fully
random, but we generate it here using only $O(\log n)$ bits. We
note that for this concrete application, we have good specialized
techniques \cite{KS93}, but they do not allow us to
compute $\Sample(x)$ directly for a given key $x\in[n]$ in constant time.

In the streaming context above, testing is also a natural application:
if someone claims to have the correct value function $v':U\fct R$, 
then we can test if $v'-v$ is non-zero by comparing the sampled sums.

A quite different application from \cite{KKMT14}, inspired by \cite{AhnGM12,KapronKM13}, works with values
from $R=\mathbb F_2$. We have a tree $T$ in a graph $G=(V,E)$ and we want
to test if some edge from $E$ is leaving $T$.  The value of an edge is
the number of end-points it has in $T$, which in $\mathbb F_2$ is zero
if the edge has both end-points in $T$, so we have a non-zero value exactly
when there is an edge leaving $T$. Accumulating the sampled sum to
test for a non-zero is comparatively easy in many contexts.

Following \cite{NaorN93}, to decrease the probability of missing a
non-zero from $\alpha$ to $\eps$, we can apply $d=\ceil{\log_{\alpha}
  \eps}$ independent $\alpha$-distinguishers
$\Sample_1,\ldots, \Sample_d:U\fct[2]$, and ask if the vector of sampled
sums $\left(\sum v(\Sample_1),\ldots,\sum v(\Sample_d)\right)$ is all zeros. In
fact, these samplers do not really need to be independent. As
described in \cite[\S 3.2]{NaorN93}, based on a random walk in an
expander, we can generate $d=O(\log (1/\eps))$ random samplers using
$O(\log u +\log (1/\eps))$ random bits, so that the probability that all
sampled sums are zero is $\eps$.  We shall generally refer to this as
a {\em vector distinguisher}.

\drop{The above vector distinguisher $(\Sample_1(x),\ldots,\Sample_d(x))$ 
can be used to construct a $((1-\eps)/2)$-probability distinguisher. We combine it with a fully random bit-vector $\vec b=(b_1,\ldots,b_d)\in [2]^d$,
and define 
\[\SAMPLE(x)=\langle\vec b,(\Sample_1(x),\ldots,\Sample_d(x))\rangle=\sum_{i=1}^d
(\vec b_i \cdot \Sample_i(x)).\]
Since $\Pr[(\sum(v(\Sample_1)),\ldots,\sum v(\Sample_d))=0^d]=\eps$,
we get $(1-\eps)/2\leq \Pr[\SAMPLE(x)]\leq 1/2$.}

In \cite{NaorN93} the main focus is the case where the values are from
$\mathbb F_2$. The goal is to create an {\em $\eps$-biased sampling
  space}\footnote{We follow the definition from 
\texttt{http://en.wikipedia.org/wiki/Small-bias\_sample\_space}, which
is different but mathematically equivalent to the one in \cite{NaorN93}.}
defined by a sampling function $\SAMPLE:U\fct[2]$ that for any
non-empty subset $I\subseteq U$, samples an odd number of elements
from $I$ with probability $(1\pm\eps)/2$. Translating to our
framework, let the value function $v:U\fct \Ftwo$ be the
characteristic function for $I$, that is, $v(x)=1\iff x\in I$. Then
we want $\Pr[\sum v(\SAMPLE)=1]=(1\pm\eps)/2$. To define such a function $\SAMPLE$, we can use the above
vector distinguisher with $d=\lceil \log_{\alpha} \eps\rceil$ regular $\alpha$-probability
distinguishers $\Sample_1,\ldots, \Sample_d:U\fct[2]$, and take the inner product
with a fully random bit vector $\vec b=(b_1,\ldots,b_d)\in [2]^d$. Thus we define
\begin{equation}\label{eq:SAMPLE}
\SAMPLE(x)=\left\langle\vec b,\ (\Sample_1(x),\ldots,\Sample_d(x))\right \rangle=\sum_{i=1}^d
(b_i \cdot \Sample_i(x)).
\end{equation}
Since $\Pr[(\sum v(\Sample_1),\ldots,\sum v(\Sample_d))=0^d]\leq \eps$, we get
\[(1-\eps)/2\leq \Pr\left[\sum v(\SAMPLE)=1\right]\leq 1/2.\]
Like the vector distinguisher, this construction can be implemented
using $O(\log u+\log (1/\eps))$ random bits.

Naor and Naor's paper \cite{NaorN93} on small bias sampling spaces
have been very influential with more than 500 citations, and we cannot
cover all applications. The point in the current paper is to show that
constant probability distinguishers, which often play a central
role, now have a truly practical implementation. We are not
claiming any new asymptotic bounds in terms of $O$-notation.

\subsection{Comparison with previous implementations}
When it comes to the implementation of constant probability
distinguishers, we already mentioned the original construction of Naor
and Naor \cite[\S 3.1.1, pp. 843--846]{NaorN93} involving, among other
components, a 7-independent hash function, and coincidentally, it is
a 1/8-probability distinguisher like our \texttt{a*x<=t}. 

\paragraph{Small bias sampling spaces}
Alon et al.~\cite{AlonGHP92} have presented some simple constructions
of $\eps$-biased sampling spaces over $\Ftwo$. The goal is
to make sampling decisions $r_i$ for $i\in[n]$.
Quoting \cite{AlonGHP93}: {\em Our three constructions are so simple
that they can be described in the three corresponding paragraphs below:
\begin{enumerate}
\item A point in the first sample space is specified by two bit strings
  of length $m\stackrel{def}=\log (n/\eps)$ each, denoted $f_0\cdots f_{m-1}$ and
  $s_0\cdots s_{m-1}$, where $f_0=1$ and $f(t)=t^m+\sum_{i=0}^{m-1}
  f_i\cdot t^i$ is an irreducible polynomial. The $n$-bit sample
  string, denoted $r_0\cdots r_{n-1}$, is determined by $r_i=s_i$ for
  $i\leq m$ and $r_i=\sum_{j=0}^{m-1}f_j\cdot r_{i-m+j}$ for $i\geq m$.
\item A point in the second sample space is specified by residue $x$
  modulo a fixed prime $p\geq (n/\eps)^2$. The $n$-bit sample
  string, denoted $r_0\cdots r_{n-1}$, is determined by $r_i=0$ if $x+i$ is
  a quadratic residue modulo $p$ and $r_i=1$ otherwise.
\item A point in the third sample space is specified by two bit strings
  of length $m\stackrel{def}=\log (n/\eps)$ each, denoted $x$ and $y$. The $n$-bit sample
  string, denoted $r_0\cdots r_{n-1}$, is determined by letting $r_i$ equal the
inner product mod-2 of the binary vectors $x^i$ and $y$ where $x^i$ is the $i^{th}$ power of $x$ when considered as an element of $GF(2^m)$.
\end{enumerate}}
Mathematically, these descriptions are all very simple, and could all
be made to work if we want to generate all $n$ bits $r_i$. The setting
of this paper, however, is that of hashing, where
we want to compute $r_i$ directly for any given  key $i\in [n]$. Moreover,
we typically think of $n$ as much larger than the available space,
e.g., $n=u=2^{64}$ for 64-bit keys.
In this hashing context, Construction 1 has problems because
$r_i$ is determined from $r_{i-m},\ldots,r_{i-1}$.  Construction 2 and 3
both appear to require $\Omega(\log u)$ time for computing $r_i$
directly, e.g., using Euler's criterion for the quadratic residues. Thus, 
with $u\gg1/\eps$, it is
much better to use the function $SAMPLE$ from \req{eq:SAMPLE} which makes
$O(\log(1/\eps))$ constant time calls to a constant probability
distinguisher, e.g., either the one from \cite{NaorN93}, or, for even
better constants, the new \texttt{a*x<=t} distinguisher proposed in here.

We note that there has been interesting recent work
\cite{Ben-AroyaT13} on reducing the number of random bits needed to
generate $\eps$-biased sampling spaces. While the number of random
bits is important in some contexts, it not the concern in this paper.

\paragraph{AMS sketches}
In the case where the values are integers or reals, 
we can also get a distinguisher from the AMS sketches \cite{alon96frequency}.
These sketches are used to estimate 
$\sum_{x\in U} (v(x))^2$, which is non-zero if and only
if $v$ has a non-zero. Based on the analysis in  \cite{alon96frequency},
it is quite easy to see that if $v:U\fct\Reals$ has a non-zero,
then the AMS sketch returns non-zero with probability $1/3$.

The AMS sketch is based on 4-independent hashing, and using
the same analysis as in  \cite{alon96frequency}, we can easily show
that any $4$-independent sampling function $\Sample:U\fct[2]$ is
a distinguisher with probability $1/3$ when the values are reals.
This is a nice contrast to the situation with for $\Ftwo$ where
even $(|U|-1)$-independence does not suffice on its own.

We shall return to the AMS sketch and the 4-independent sampling of
reals in Section \ref{sec:AMS}.

\subsection{Theory and Practice}\label{sec:exper}
In theory of computing we often calculate running times with
$O$-notation, ignoring constant factors since we try
to identify the dominant term as the problem size $n$ goes to
infinity.  Nevertheless, constants do matter, and they can be appealing
to study when we know that we are getting close to the right answer,
e.g., for the number of comparisons needed for sorting.

In this paper, we show that a constant probability distinguisher can
be implemented as \texttt{a*x<=t} using just two standard instructions on integers. The
constant two is clearly optimal as no single standard instruction will
do on its own. 

As with many theoretical measures, we do not expect such an
instruction count to fully capture the practical performance on
real computers, because for example the relative performance of different
instructions vary from computer to computer. Nevertheless, with
such a small instruction count, we would expect it to be
very competitive in practice. 

\begin{table}
\begin{center}
\begin{tabular}{|c|r|}\hline
code/scheme & time (ns)\\\hline
\texttt{a*x<=t} & 1.02 \\
\texttt{a*x>>63} & 0.76 \\
\texttt{if (a*x<=t) S+=x;} & 1.15 \\
\texttt{if (a*x>>63) S+=x;} & 0.97 \\
7-independent hashing  & 52.12 \\\hline
\end{tabular}
\end{center}
\caption{Average time per computation over 10 million 
64-bit keys \textnormal{\texttt{x}} on a MacBook Air with a 1.7 GHz
Intel Core i5 Processor. The table compares our new \textnormal{\texttt{a*x<=t}} distinguisher
with the universal hashing scheme  \textnormal{\texttt{a*x>>63}} 
which is not a distinguisher but a general benchmark for fast hashing. They
are timed both alone, and inside a conditional statement.
The table also includes the average time for 7-independent hashing which is needed as a subroutine
of the previous distinguisher by \cite{NaorN93}}\label{tab:exper}
\end{table}

For 64-bit keys, we performed some simple experiments summarized in
Table \ref{tab:exper}. We compared our \texttt{a*x<=t} distinguisher
with the fastest known universal hashing scheme \texttt{a*x>>s} from
\cite{dietzfel97closest}. The time bounds reported are averaged over
10 million executions with different values of \texttt x.  More
precisely, we had a loop where we in each round incremented \texttt x
with the same 64-bit random number. Our reported time bounds include
the cost of this addition and the cost of the loop, but these costs
are much smaller than that of a multiplication.  With \texttt{s}
$=63$, both schemes can be viewed as returning a single bit, and both
combine a multiplication of $64$-bit integers with a single AC$^0$
operation, so a similar performance could be expected. Our
\texttt{a*x<=t} scheme took 1.02 ns whereas \texttt{a*x>>s} took 0.76
ns, so we were 34\% slower. A possible explanation is that while a
comparison itself is very fast, most computers do not return the
answer as a bit in a register.  Rather a comparison sets some flags
intended for later branching. To check this in the context of a
sampled sum, we tested ``\texttt{if (a*x<=t) S+=x;}'' which took 1.15
ns against ``\texttt{if (a*x>>63) S+=x;}'' which took 0.97 ns. Now the
difference is less than 20\%, and it is also smaller in absolute terms with
0.18 ns instead of 0.24 ns.  With this fast branching in mind, we
could implement the small bias sampler \SAMPLE{} from \req{eq:SAMPLE} using
logical {\sc and} (\texttt{\&\&}) rather than bit-wise {\sc and}
(\texttt{\&}), so as to benefit from short-circuit evaluation where
the evaluation of a Boolean expression terminates as soon as
the value is known. Thus we could implement \SAMPLE{} as 
\[\texttt{SAMPLE(x) \{B=0; for (i=0;i++;i<d) if (b[i]\&\&(a[i]*x<=t[i])) B++; return B\&1;\}}.\]
If \texttt{b[i]} is false, then with short-circuit evaluation, we skip immediately
to the next iteration of the loop.
In Section \ref{sec:non-distinguisher}, we show for completeness that
\texttt{a*key>>63} is not
a distinguisher. Rather we think of \texttt{a*key>>63} as a general
benchmark for fast hashing, and our new \texttt{a*x<=t} distinguisher 
was less than 50\% slower.

Recall now that the previous distinguisher from \cite[\S 3.1.1,
  pp. 843--846]{NaorN93} involves 7-independent hashing as a
subroutine, so it is at least as slow as 7-independent hashing. To
test 7-independent hashing, we used the implementation of
$k$-independent hashing of 64-bit keys from \cite[\S
  A.14]{thorup12kwise}. This is an tuned implementation in standard
portable C-code. It uses a random degree $d=k-1$ polynomial modulo
a prime $p>2^{64}$. As recommended in \cite{carter77universal}, it
uses a Mersenne prime $p=2^{89}-1$ so that ``$\bmod\;p$'' can be
computed with shifts and bit-wise Boolean operations. Still we have
the issue that standard 64-bit multiplication works modulo $2^{64}$, so
several of these are needed to implement a single multiplication modulo $2^{89}-1$.
As described in \cite[\S A.14]{thorup12kwise}, the implementation ends up
using $59d+7$ standard instructions, including $6d$
multiplications. We have $d=6$ for 7-independence, so we expect
7-independence to be much slower than our \texttt{a*x<=t} scheme.
In our experiments from Table \ref{tab:exper}, we see that
7-independent hashing takes 52 ns on the average. This is 50 times slower
than the 1.02 ns of our \texttt{a*x<=t} scheme.

The relative performance of different codes vary a lot from computer
to computer (see, e.g., the experiments in \cite[Table
  1]{thorup12kwise} on three different computers). There are also
other ways of implementing 7-independent hashing \cite{wegman81kwise},
particularly if one starts using more specialized instructions for
multiplication over certain fields \cite{IntelCGM}. One could also
have considered double tabulation \cite{Tho13:simple-simple}, but it
is only relevant for much higher independence, e.g., for 64-bit keys,
it requires 24 lookups in tables of size $2^{22}$, but
then it gives 100-independent hashing.

The above mentioned experiments only serve to illustrate
that our new scheme  \texttt{a*x<=t} is much faster
than that in \cite{NaorN93}. This may be important in time-critical situations like in the processing of a high volume data stream.
Perhaps more important for the potential of practical impact, our new 
 \texttt{a*x<=t} scheme is trivial to implement for anyone
considering the theory of distinguishers and small bias sampling
schemes for use in practice.

\subsection{Contents}
In Sections \ref{sec:2-indep}--\ref{sec:a*x} we will present our
general distinguishers, starting from the mathematically simplest
based on 2-independent hashing, and ending with the most efficient one
form the title. In the remaining sections, we will discuss some possible alternatives,
and their limitations.

\section{Using 2-independent hashing}\label{sec:2-indep}
First we present our simple generic distinguisher using
an arbitrary 2-independent
hash function $h:U\fct[m]$ and a uniformly random $t\in [m]$,
defining 
$\Sample:U\fct[2]$ as
\[\Sample^h_t(x)=[h(x)\leq t].\]
We also have a value function $v:U\fct R$ mapping keys $x\in U$ into elements
of some commutative monoid $R$ with plus as binary operator and zero the
neutral element. We assume that $v$ is non-zero for some key. 
Let $S\subseteq U$ be the set of keys with non-zero values, and let
$n=|S|>0$.
We define the {\em sampled sum\/}
\[V^h(t)=\sum v(\Sample^h_t)=\sum\{v(x)\mid x\in U\wedge \Sample^h_t(x)\}.\]
To prove that $\Sample^h_t$ is a distinguisher with probability $(1-n^2/m^2)/8$, we need to prove
\begin{equation}\label{eq:2-indep}
\Pr[V^h(t)\neq 0]\geq (1-n^2/m^2)/8.
\end{equation}
In most of our analysis, we will think of the hash function $h$ as
fixed.  Only at the very end will we use that $h$ is a random 2-independent
hash function. Let $x_{1},\ldots,x_{n}$ denote the keys in $S$
sorted according to their hash values, with ties broken
arbitrarily. For $i=1,\ldots,n-1$, let
$I_i=[h(x_{i}),h(x_{i+1}))$. We view $I_i$ as an interval of $h(x_{i+1})-h(x_{i})$ integers, which is empty if $h(x_{i})=h(x_{i+1})$.
Moreover, let $I_0=[0,h(x_{1}))$ and
    $I_n=[h(x_{n}),m)$. Note for any $I_i$, that $V^h(t)$ is a
      constant for $t\in I_i$, and we denote this constant
      $V^h(I_i)$. Then $V^h(I_0)=0$ and $V^h(I_i)=V^h(I_{i-1})+v(x_i)$ for
      $i=1,\ldots,n$. We say an interval $I_i$ is ``good'' if
      $V^h(I_i)\neq 0$; otherwise it is ``bad''. Let $\GOOD^h$ be
      the union of the good intervals.  Then $V^h(t)\neq 0$ if and
      only if $t\in \GOOD^h$. When $h$ is fixed while $t$ is uniformly
      random in $[m]$, we have $\Pr[V^h(t)\neq 0]=|\GOOD^h|/m$, so when
      $h:U\fct[m]$ is random,
\[\Pr_{h,t}[V^h(t)\neq 0]=\E_h[|\GOOD^h|]/m.\]
Generally, for each key $x_i\in S$, we define
the neighboring intervals $L^h(x_i)=I_{i-1}$ and $R^h(x_i)=I_i$. Since
$V^h(I_i)= V^h(I_{i-1})+v(x_i)$ for $i=1,\ldots,n$, we know that either
$L^h(x_i)$ or $R^h(x_i)$ is good. 

The analysis below is focused on an arbitrary $x\in S$. The length of
the good interval around $x$ is lower bounded by
$\ell^h(x)=\min\{|L^h(x)|,|R^h(x)|\}$. Therefore $\sum_{x\in S}\ell^h(x)/2$
is a lower bound on $|\GOOD^h|$. The following observation follows
directly from the definition of $\ell^h(x)$.
\begin{observation}\label{obs:cond-ell} $\ell^h(x)\geq \delta$ if and only if the
following conditions are satisfied:
\begin{itemize}
\item[(a)] for 
all $y\in S\setminus\{x\}$, $h(y)\not\in (h(x)-\delta,h(x)+\delta)$, and
\item[(b)] $\delta\leq h(x)\leq m-\delta$.
\end{itemize}
\end{observation}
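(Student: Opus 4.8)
The plan is to prove both implications by directly unpacking the definitions of $L^h(x)$, $R^h(x)$, and $\ell^h(x)$, with the only real care needed at the two boundary intervals $I_0$ and $I_n$. Fix the key in question and write it as $x=x_i$ in the sorted order, so that its left interval is $L^h(x_i)=I_{i-1}$ and its right interval is $R^h(x_i)=I_i$. First I would record the lengths explicitly: $|R^h(x_i)|=h(x_{i+1})-h(x_i)$ for $i<n$ and $|R^h(x_n)|=m-h(x_n)$, while $|L^h(x_i)|=h(x_i)-h(x_{i-1})$ for $i>1$ and $|L^h(x_1)|=h(x_1)$. Since by definition $\ell^h(x_i)\geq\delta$ is equivalent to the conjunction of $|L^h(x_i)|\geq\delta$ and $|R^h(x_i)|\geq\delta$, it suffices to match these two length bounds against conditions (a) and (b).

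For the direction ($\Leftarrow$) I would show that (b) forces the two boundary lengths while (a) forces the two interior lengths. At the boundaries, if $i=n$ then $|R^h(x_n)|=m-h(x_n)\geq\delta$ is exactly the upper bound $h(x_i)\leq m-\delta$ of (b), and symmetrically if $i=1$ then $|L^h(x_1)|=h(x_1)\geq\delta$ is the lower bound of (b). For an interior right neighbor ($i<n$), sortedness gives $h(x_{i+1})\geq h(x_i)$, so applying (a) to $y=x_{i+1}$ rules out $h(x_{i+1})\in(h(x_i)-\delta,h(x_i)+\delta)$ and hence forces $h(x_{i+1})\geq h(x_i)+\delta$, that is, $|R^h(x_i)|\geq\delta$; the left neighbor $y=x_{i-1}$ is handled identically, giving $h(x_{i-1})\leq h(x_i)-\delta$ and $|L^h(x_i)|\geq\delta$.

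For the direction ($\Rightarrow$), assume $|L^h(x_i)|\geq\delta$ and $|R^h(x_i)|\geq\delta$. The lower bound in (b) follows from $h(x_i)\geq h(x_i)-h(x_{i-1})=|L^h(x_i)|\geq\delta$ when $i>1$ (using $h(x_{i-1})\geq0$) and from $h(x_1)=|L^h(x_1)|\geq\delta$ when $i=1$; the upper bound in (b) follows symmetrically from the right length together with $h(x_{i+1})\leq m$. For (a) I must rule out \emph{every} $y\in S\setminus\{x_i\}$, not just the two immediate neighbors, and this is exactly where the sorted order does the work: for $j<i$ we have $h(x_j)\leq h(x_{i-1})=h(x_i)-|L^h(x_i)|\leq h(x_i)-\delta$, and for $j>i$ we have $h(x_j)\geq h(x_{i+1})=h(x_i)+|R^h(x_i)|\geq h(x_i)+\delta$, so in either case $h(x_j)\notin(h(x_i)-\delta,h(x_i)+\delta)$.

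The statement is elementary, so I do not expect a deep obstacle; the points that require care are the two boundary intervals $I_0$ and $I_n$ (where one neighbor is absent and condition (b) plays the role of the missing neighbor), the use of the \emph{open} interval in (a) so that a key landing exactly at distance $\delta$ is permitted, and the reduction in the ($\Rightarrow$) direction of the all-keys condition (a) down to the two immediate neighbors via sortedness. I would also dispose of the degenerate case $\delta=0$ at the outset, where (a) is vacuous, (b) always holds since $h(x)\in\{0,\dots,m-1\}$, and $\ell^h(x)\geq0$ trivially, so the equivalence is immediate.
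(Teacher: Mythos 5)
Your proof is correct and is exactly the routine unpacking of the definitions of $L^h$, $R^h$, and $\ell^h$ that the paper has in mind: the paper gives no proof at all, simply asserting that the observation "follows directly from the definition of $\ell^h(x)$." Your treatment of the boundary intervals $I_0$ and $I_n$, the reduction of condition (a) to the two sorted neighbors, and the openness of the interval in (a) are all the right points of care, and nothing is missing.
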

\begin{lemma}\label{lem:prob-2-ind} If $h:U\fct [m]$ is 2-independent and
$|S|=n$, then 
$\Pr_h[\ell^h(x)\geq \delta]\geq 1-n(2\delta-1)/m$.
\end{lemma}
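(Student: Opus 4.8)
The plan is to bound the complementary probability $\Pr_h[\ell^h(x)<\delta]$ and subtract from $1$. By Observation~\ref{obs:cond-ell}, the event $\ell^h(x)\geq\delta$ is exactly the conjunction of condition (a) (no other key of $S$ hashes into the open interval $(h(x)-\delta,h(x)+\delta)$) and condition (b) ($\delta\leq h(x)\leq m-\delta$). Hence its negation is the union of the failure of (b) with, over $y\in S\setminus\{x\}$, the events $h(y)\in(h(x)-\delta,h(x)+\delta)$. I would control this union by a single union bound, treating the $n-1$ type-(a) events and the one type-(b) event separately, exploiting that $h$ is only assumed $2$-independent.

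For a fixed $y\neq x$, $2$-independence makes the pair $(h(x),h(y))$ uniform on $[m]^2$. Conditioning on $h(x)=c$, the integers lying in the open interval $(c-\delta,c+\delta)$ are exactly $c-\delta+1,\ldots,c+\delta-1$, of which there are $2\delta-1$; intersecting with $[m]$ only decreases this count. Thus $\Pr_h[h(y)\in(h(x)-\delta,h(x)+\delta)]\leq(2\delta-1)/m$ for each of the $n-1$ keys $y$. For condition (b), the failure set $\{0,\ldots,\delta-1\}\cup\{m-\delta+1,\ldots,m-1\}$ has $\delta+(\delta-1)=2\delta-1$ elements, so $\Pr_h[\text{(b) fails}]=(2\delta-1)/m$ as well. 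Summing the $n-1$ type-(a) bounds and the single type-(b) bound gives $\Pr_h[\ell^h(x)<\delta]\leq n(2\delta-1)/m$, and the claim follows by complementation.

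The argument is essentially a routine union bound, so the only real care is bookkeeping the cardinalities so that each of the $n$ contributing events costs exactly $(2\delta-1)/m$. The mild subtlety --- and what makes the final bound so clean --- is that the open, symmetric interval in (a) contains $2\delta-1$ integers (not $2\delta$), while the failure region of (b) is asymmetric, contributing $\delta$ small values and $\delta-1$ large values for the same total $2\delta-1$. I would also note that $2$-independence is used only in its pairwise form between $x$ and a single $y$, never jointly across several $y$'s, which is precisely why union bounding (rather than, say, inclusion--exclusion) is the natural route, and why no stronger independence assumption is needed for this step.
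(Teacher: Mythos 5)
Your proof is correct and follows essentially the same route as the paper: a union bound over the violating events of Observation~\ref{obs:cond-ell}, with each of the $n-1$ type-(a) events and the single type-(b) event contributing $(2\delta-1)/m$. The extra bookkeeping you supply (the open interval in (a) containing $2\delta-1$ integers, and the asymmetric $\delta+(\delta-1)$ count for (b)) just makes explicit what the paper states more tersely.
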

\begin{proof}
We are going to apply a union bound over the violating events from
Observation \ref{obs:cond-ell}.  For
each $y\in S\setminus\{x\}$ in (a), for any value of $h(x)$, there are
$2\delta-1$ violating values of $h(y)$, each happening with
probability $1/m$.  In (b) there
are $2\delta-1$ violating values of $h(x)$, each happening with
probability $1/m$. By a union bound, the probability that at least one of all
these violating events happen is bounded by $n(2\delta -1)/m$.
\end{proof}
\begin{theorem}\label{thm:2-ind} If $h:U\fct [m]$ is 2-independent
and $t$ is uniform in $[m]$, then $\Sample(x)=[h(x)\leq t]$ is
a $(1-n^2/m^2)/8$ probability distinguisher.
\end{theorem}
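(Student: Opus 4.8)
The plan is to lower bound $\E_h[|\GOOD^h|]$ and then invoke the identity $\Pr_{h,t}[V^h(t)\neq 0]=\E_h[|\GOOD^h|]/m$ established above. Since $\sum_{x\in S}\ell^h(x)/2$ is a lower bound on $|\GOOD^h|$ for every fixed $h$, linearity of expectation gives $\E_h[|\GOOD^h|]\ge \tfrac12\sum_{x\in S}\E_h[\ell^h(x)]$, so it suffices to lower bound $\E_h[\ell^h(x)]$ for an arbitrary $x\in S$; the bound we obtain will not depend on $x$, so summing over the $n$ keys of $S$ just contributes a factor $n$. We may assume $n<m$, since otherwise $(1-n^2/m^2)/8\le 0$ and there is nothing to prove.

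First I would convert the tail bound of Lemma \ref{lem:prob-2-ind} into a bound on the expectation. Writing $\E_h[\ell^h(x)]=\sum_{\delta\ge 1}\Prp{\ell^h(x)\ge\delta}$ and truncating after $D$ terms, the per-term estimate $\Prp{\ell^h(x)\ge\delta}\ge 1-n(2\delta-1)/m$ together with $\sum_{\delta=1}^D(2\delta-1)=D^2$ yields
\[
\E_h[\ell^h(x)]\ \ge\ \sum_{\delta=1}^{D}\left(1-\frac{n(2\delta-1)}{m}\right)\ =\ D-\frac{n}{m}D^2
\]
for every positive integer $D$. Truncation is legitimate because we are dropping nonnegative tail terms, and the inequality chain survives even if individual summand bounds turn negative.

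The crux is to choose $D$ so as to lose as little as possible to integrality. Completing the square gives $D-\tfrac{n}{m}D^2=\tfrac{m}{4n}-\tfrac{n}{m}\bigl(D-\tfrac{m}{2n}\bigr)^2$. Taking $D$ to be the integer nearest to $m/(2n)$ — which is at least $1$ because $n<m$ forces $m/(2n)>1/2$ — makes $\bigl(D-\tfrac{m}{2n}\bigr)^2\le 1/4$, hence
\[
\E_h[\ell^h(x)]\ \ge\ \frac{m}{4n}-\frac{n}{m}\cdot\frac14\ =\ \frac{m}{4n}\Bigl(1-\frac{n^2}{m^2}\Bigr).
\]
I expect this integrality step to be the only real subtlety: a cruder choice of $D$, or bounding the discrete sum by an integral, only yields a weaker factor such as $(1-n/m)^2/8$, whereas completing the square extracts exactly the claimed $(1-n^2/m^2)$, and the worst case (when $m/(2n)$ lands halfway between integers) shows this factor is tight for the method.

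Finally I would assemble the pieces. Summing the last bound over the $n$ keys of $S$ gives $\E_h[|\GOOD^h|]\ge \tfrac12\cdot n\cdot \tfrac{m}{4n}(1-n^2/m^2)=\tfrac{m}{8}(1-n^2/m^2)$, and dividing by $m$ yields
\[
\Pr_{h,t}[V^h(t)\neq 0]\ =\ \frac{\E_h[|\GOOD^h|]}{m}\ \ge\ \frac{1}{8}\Bigl(1-\frac{n^2}{m^2}\Bigr),
\]
which is exactly \req{eq:2-indep} and proves the theorem.
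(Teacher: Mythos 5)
Your proposal is correct and follows essentially the same route as the paper: truncate $\E_h[\ell^h(x)]=\sum_{\delta\ge1}\Pr[\ell^h(x)\ge\delta]$ at an integer $D$, apply Lemma \ref{lem:prob-2-ind} to get the quadratic $D-nD^2/m$, and optimize $D$ over the integers to obtain $\E_h[\ell^h(x)]\ge (m/n-n/m)/4$ before summing over $S$ and dividing by $2m$. The only cosmetic difference is that you complete the square and round $m/(2n)$ to the nearest integer where the paper invokes concavity and evaluates at $k_0-1/2$; the resulting bound is identical.
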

\begin{proof} We have $\E_h[\ell^h(x)]=\sum_{\delta=1}^\infty 
\Pr[\ell^h(x)\geq\delta]\geq\sum_{\delta=1}^k
\Pr[\ell^h(x)\geq\delta]$ for any integer $k$. Hence, by 
Lemma \ref{lem:prob-2-ind},
\begin{align*}
\E[\ell^h(x)]&\geq \sum_{\delta=1}^k\Pr[\ell^h(x)\geq\delta]\geq
\sum_{\delta=1}^{k} (1-(2\delta-1)n/m)\\ &=k-(2k(k+1)/2-k)n/m=k-k^2
n/m.
\end{align*}
The quadratic function $f:k\mapsto k-k^2n/m$ is concave and symmetric around
its maximum in the reals at $k_0=m/(2n)$. We pick the integer $k\in[k_0-1/2,k_0+1/2)$. Then
\begin{align*}
\E[\ell^h(x)]&\geq f(k)\geq f(k_0-1/2)\\
&= (m/(2n)-1/2)-(m/(2n)-1/2)^2n/m\\
&=(m/n-n/m)/4.
\end{align*}
Since $\sum_{x\in S}\ell^h(x)/2$ is a lower bound on $|\GOOD^h|$, we now get
\begin{align*}
\Pr_{h,t}[V^h(t)\neq 0]&=\E_h[|\GOOD^h|]/m\geq \E[\sum_{x\in S}\ell^h(x)]/(2m)\\
&\geq n/(2m)\cdot (m/n-n/m)/4=(1-n^2/m^2)/8.
\end{align*}
\end{proof}
\section{Simple tuning}\label{sec:tuning}
So far we have considered $h$ to be any 2-independent hash function, e.g.,
for some prime $p$, we could choose $a,b\in [p]$ uniformly at random,
and then use the classic $h(x)=(ax+b)\bmod p$. We will argue that it
suffices to use $h(x)=ax\bmod p$. Note that if $a\neq 0$, then $h$ is
a permutation. We will show that by picking $a$ uniformly in
$[p]_+=[p]\setminus\{0\}$, we get
a distinguisher with probability strictly bigger than $1/8$.

To analyze a hash function like $h(x)=ax\bmod p$, we need a better
lower bound on the good interval around a key. Consider the case where
$x=0$ is the only key with non-zero value. Then $h(x)=0$ so the
smallest interval around $x$ has length $\ell^h(x)=0$. However, recall
that $V^h(I_0)=0$ while $V^h(I_1)=v(x_1)\neq 0$. To exploit this, we
define a tighter lower bound $\ell^h_+(x)$ on the good interval around
$h(x)$. We set $\ell^h_+(x_1)=|I_1|$ while for $i=2,\ldots,n$,
$\ell^h_+(x_i)=\min\{|I_{i-1}|,|I_i|\}$. We can now tighten
Observation \ref{obs:cond-ell} for the new measure:
\begin{observation}\label{obs:cond-ell+} $\ell^h_+(x)\geq \delta$ if and only if the
following conditions are satisfied:
\begin{itemize}
\item[(a)] for 
all $y\in S\setminus\{x\}$, $|h(y)-h(x)|\geq \delta$, and
\item[(b)] $h(x)\leq m-\delta$.
\end{itemize}
\end{observation}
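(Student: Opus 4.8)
The plan is to establish the equivalence by translating both $\ell^h_+(x)$ and the two conditions (a), (b) into statements about the sorted hash values $h(x_1)\leq\cdots\leq h(x_n)$, and then checking the ``iff'' separately for an interior key, for the last key $x_n$, and for the first key $x_1$. For any key $x=x_i$, condition (a) requires every other key to sit at hash-distance at least $\delta$; since the hash values are sorted, the binding constraints come only from the immediate neighbors $x_{i-1}$ and $x_{i+1}$, so (a) is equivalent to $h(x_i)-h(x_{i-1})\geq\delta$ and $h(x_{i+1})-h(x_i)\geq\delta$, i.e.\ to $|I_{i-1}|\geq\delta$ and $|I_i|\geq\delta$, whenever these intervals are bounded on both sides by keys of $S$.

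First I would treat the interior case $2\leq i\leq n-1$, where $\ell^h_+(x_i)=\min\{|I_{i-1}|,|I_i|\}$ exactly as for the old measure. Here $\ell^h_+(x_i)\geq\delta$ is equivalent to both neighbor gaps being at least $\delta$, which is precisely condition (a); and I would observe that (a) already forces (b), since $h(x_{i+1})\leq m-1$ together with $h(x_{i+1})-h(x_i)\geq\delta$ gives $h(x_i)\leq m-1-\delta<m-\delta$. Thus for interior keys (b) is implied by (a) and the equivalence follows.

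Next I would handle $x=x_n$ with $n\geq 2$, where $\ell^h_+(x_n)=\min\{|I_{n-1}|,|I_n|\}$ but now the right interval $I_n=[h(x_n),m)$ is bounded by $m$ rather than by another key. The gap $|I_{n-1}|=h(x_n)-h(x_{n-1})\geq\delta$ is exactly the single, left-neighbor instance of condition (a), while $|I_n|=m-h(x_n)\geq\delta$ is exactly condition (b). Hence $\ell^h_+(x_n)\geq\delta$ iff (a) and (b), and here (b) is genuinely binding. The degenerate case $n=1$ reduces to $\ell^h_+(x_1)=|I_1|=m-h(x_1)\geq\delta$ with (a) vacuous, giving the equivalence with (b) directly.

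The \emph{crux}, and the only place the new measure differs from $\ell^h$, is the key $x_1$ for $n\geq 2$: here $\ell^h_+(x_1)=|I_1|$, deliberately discarding the left interval $I_0=[0,h(x_1))$. Consequently $\ell^h_+(x_1)\geq\delta$ is equivalent to $|I_1|=h(x_2)-h(x_1)\geq\delta$, which is the right-neighbor instance of (a); exactly as in the interior case this already forces (b), so no lower bound on $h(x_1)$ survives. This is precisely why the new condition (b) drops the hypothesis $\delta\leq h(x)$ present in Observation~\ref{obs:cond-ell}: that lower bound was only ever needed to certify $|I_0|\geq\delta$ for the leftmost key, and since $\ell^h_+$ no longer inspects $I_0$ it becomes superfluous. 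I expect the main obstacle to be purely bookkeeping---keeping the boundary intervals $I_0$ and $I_n$ straight, and applying the ``(a) implies (b)'' reductions only where an interval is bounded by a genuine key rather than by $0$ or $m$---rather than any substantive mathematical difficulty.
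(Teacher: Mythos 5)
Your case analysis is correct and complete: the paper offers no explicit proof of this observation (it is presented as a tightening of Observation~\ref{obs:cond-ell} that follows directly from the definition of $\ell^h_+$), and your argument is exactly the routine unfolding of that definition --- reducing (a) to the two neighboring gaps, noting that (a) forces (b) whenever the right neighbor is a genuine key, and isolating $x_n$ (and $n=1$) as the only case where (b) is binding --- that the paper leaves implicit. You also correctly pinpoint the one substantive difference from $\ell^h$, namely that discarding $I_0$ at the leftmost key is what removes the lower bound $\delta\leq h(x)$.
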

We note that Observation \ref{obs:cond-ell+} (a) is equivalent to
Observation \ref{obs:cond-ell} (a) while Observation
\ref{obs:cond-ell+} (b) has dropped the condition from Observation \ref{obs:cond-ell} (b) that $h(x)\geq
\delta$. 

The rest of the paper is devoted to hash functions of the form
\begin{equation}\label{eq:ax-m}
h_a^m(x)=ax\bmod m\textnormal,
\end{equation}
both with $m$ is a prime and with $m$ a power of two. We shall use
the notation
\[|x|_{\bmod m}=\min\{x \bmod m,-x \bmod m\}.\]
Also, for each $x\in S$, we define
\[S_x=\{y-x\mid y\in S, y\neq x\}\cup(\{x\}\setminus\{0\})\]
Note that $0\not\in S_x$. 
\begin{lemma}\label{lem:cond-ell+} $\ell^{h^m_a}_+(x)\geq \delta$ if 
for all $z\in S_x$, $|az|_{\bmod m}\geq\delta$.
\end{lemma}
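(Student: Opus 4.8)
The plan is to reduce everything to the two conditions of Observation \ref{obs:cond-ell+} and then verify them from the single hypothesis ``$|az|_{\bmod m}\geq\delta$ for all $z\in S_x$''. Since Observation \ref{obs:cond-ell+} gives an \emph{iff} characterization of $\ell^{h^m_a}_+(x)\geq\delta$, it suffices to establish (a) that every other key is at $h$-distance at least $\delta$ from $x$, and (b) that $h^m_a(x)\leq m-\delta$. The whole argument rests on one translation identity: for $h=h^m_a$, the quantity $|az|_{\bmod m}$ with $z=y-x$ measures the \emph{circular} (wrap-around) distance between the hash values $h(y)$ and $h(x)$ on $\mathbb Z_m$, and this never exceeds the \emph{linear} distance $|h(y)-h(x)|$ that actually governs the interval lengths in condition (a).

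To make this precise I would set $d=h(y)-h(x)=(ay\bmod m)-(ax\bmod m)\in(-m,m)$ and use $a(y-x)\equiv ay-ax\pmod m$ to check, by splitting on the sign of $d$, that
\[
|a(y-x)|_{\bmod m}=\min\{|d|,\,m-|d|\}\leq |d| = |h(y)-h(x)|.
\]
Now for each $y\in S\setminus\{x\}$ the element $z=y-x$ lies in $S_x$, so the hypothesis gives $|h(y)-h(x)|=|d|\geq |a(y-x)|_{\bmod m}\geq\delta$, which is exactly condition (a). The inequality points the right way precisely because we are lower-bounding $\ell^{h^m_a}_+(x)$: circular distance $\le$ linear distance, so a lower bound on the former transfers to the latter.

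For condition (b) I would use the extra element that was deliberately placed in $S_x$, namely $\{x\}\setminus\{0\}$. If $x\neq 0$ then $x\in S_x$, and since $|ax|_{\bmod m}=\min\{h(x),\,m-h(x)\}\leq m-h(x)$, the hypothesis yields $m-h(x)\geq |ax|_{\bmod m}\geq\delta$, i.e.\ $h(x)\leq m-\delta$. If $x=0$ then $h(x)=0\leq m-\delta$ holds trivially (for the only relevant range $\delta\leq m$), and here $S_x=S\setminus\{0\}$ already supplies all the differences needed for (a). So the somewhat curious definition of $S_x$ is explained: the differences $y-x$ drive condition (a), while the extra term $x$ (present exactly when $x\neq 0$) is what certifies condition (b).

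The main obstacle I anticipate is purely bookkeeping rather than conceptual: getting the modular/circular versus linear distance relationship exactly right, and confirming that the inequality $|a(y-x)|_{\bmod m}\le|h(y)-h(x)|$ is all one needs (equality is never required). I would also sanity-check the degenerate cases that Observation \ref{obs:cond-ell+} must cover — a key with $h(x)=0$, the case $n=1$ where (a) is vacuous and (b) carries the whole load, and ties among hash values — to be sure the reduction to (a) and (b) is legitimate. Note that no invertibility of $a$ is actually needed: if $h(x)=0$ for some $x\neq 0$ then $|ax|_{\bmod m}=0<\delta$ and the hypothesis already fails, so there is nothing to prove.
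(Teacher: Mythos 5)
Your proposal is correct and follows essentially the same route as the paper's proof: reduce to conditions (a) and (b) of Observation \ref{obs:cond-ell+}, use the fact that the circular distance $|a(y-x)|_{\bmod m}$ lower-bounds the linear distance $|h(y)-h(x)|$ for (a), and use the extra element $x\in S_x$ (with the $x=0$ case handled trivially) for (b). Your write-up is in fact slightly more explicit than the paper's, which states these same steps more tersely.
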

\begin{proof}
The lemma follows easily from Observation \ref{obs:cond-ell+}.
First we note that $|h^m_a(y)-h^m_a(x)|<\delta$ implies
$|h^m_a(y)-h^m_a(x)|_{\bmod m}<\delta$. Moreover, $|h^m_a(y)-h^m_a(x)|_{\bmod m}=
|a(y-z)|_{\bmod m}$, so with $z=y-x$, our condition implies
Observation \ref{obs:cond-ell+} (a). Now, if $x=0$, we always
have $h^m_a(x)=0$, and then Observation \ref{obs:cond-ell+} (b) is
satisfied. Otherwise $x\in S_x$, but then $|az|_{\bmod m}\geq \delta$
implies that $h^m_a(x)\leq m-\delta$, which again implies Observation \ref{obs:cond-ell+} (b).
\end{proof}
In the rest of this section, $m=p$ is a prime and $a$ is uniform in $[p]_+$.
\begin{lemma}\label{lem:prob-ax-p} If $p$ is prime and $a$ is uniformly
distributed in $[p]_+=\{1,\ldots,p-1\}$, then for any $x\in S$,
$\Pr_h[\ell_+^{h_a^p}(x)\geq \delta]\geq 1-2n(\delta-1)/(p-1)$.
\end{lemma}
\begin{proof}
The proof is similar to that of Lemma  \ref{lem:prob-2-ind}.
From Lemma \ref{lem:cond-ell+} we know that 
$\ell_+^{h_a^p}(x)\geq \delta$ unless for some $z\in S_x$, we have 
$|az|_{\bmod p}<\delta$. Each such $z$ is non-zero and $a$ uniform in $[p]_+$, 
so $az\mod p$ is uniformly distributed in $[p]_+$. We therefore
get $|az|_{\bmod p}<\delta$ only if $az\mod p\in \{m-\delta+1,\ldots,m-1,1,\ldots,\delta-1\}$. This happens with probability $2(\delta-1)/(p-1)$. By a union bound, the probability that this happens for any of the at most $n$ elements in $S_x$
is at most $2n(\delta-1)/(p-1)$.
\end{proof}
\begin{theorem}\label{thm:ax-p} If $p$ is prime, $a$ is uniform in $[p]_+$,
and $t$ is uniform in $[p]$, then $\Sample(x)=[ax \bmod p \leq t]$ is
a $1/8$ probability distinguisher.
\end{theorem}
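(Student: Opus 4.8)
The plan is to follow the same template as the proof of Theorem~\ref{thm:2-ind}, but using the sharper measure $\ell_+^{h_a^p}$ and its tail bound from Lemma~\ref{lem:prob-ax-p} in place of $\ell^h$ and Lemma~\ref{lem:prob-2-ind}. First I would observe that, because $V^h(I_0)=0$ and each successive interval differs by adding $v(x_i)$, the good interval surrounding $x_i$ has length at least $\ell_+^{h_a^p}(x_i)$; hence $\sum_{x\in S}\ell_+^{h_a^p}(x)/2$ is a lower bound on $|\GOOD^h|$, exactly as before. The advantage of the $+$-variant is that condition (b) of Observation~\ref{obs:cond-ell+} has dropped the requirement $h(x)\geq\delta$, which is what cost us a factor in the symmetric case and is precisely why we can now hope for a \emph{strict} $1/8$ rather than $(1-n^2/m^2)/8$.

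Next I would lower bound $\E[\ell_+^{h_a^p}(x)]$ by the tail-sum identity $\E[\ell_+^{h_a^p}(x)]=\sum_{\delta\geq 1}\Pr[\ell_+^{h_a^p}(x)\geq\delta]$, truncated at some integer $k$ and plugged into Lemma~\ref{lem:prob-ax-p}:
\[
\E[\ell_+^{h_a^p}(x)]\geq\sum_{\delta=1}^{k}\bigl(1-2n(\delta-1)/(p-1)\bigr)=k-\frac{n\,k(k-1)}{p-1}.
\]
This is again a concave quadratic in $k$, now of the form $g(k)=k-n k(k-1)/(p-1)$, whose real maximizer sits at some $k_0$ depending on $n,p$; I would choose the nearest integer $k$ and evaluate $g$ there. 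The key arithmetic point is that the linear tail coefficient here is $2n/(p-1)$ acting on $(\delta-1)$ rather than on $(2\delta-1)$, so the bound we get should work out to at least $(p-1)/(4n)$ (up to lower-order corrections), i.e.\ strictly better than the $(m/n-n/m)/4$ of the symmetric analysis.

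Finally I would assemble the pieces: since $a$ is a nonzero residue, $h_a^p$ is a permutation of $[p]$, so the $n$ keys hash to $n$ distinct values and the good intervals are genuinely disjoint; combining $|\GOOD^h|\geq\sum_{x\in S}\ell_+^{h_a^p}(x)/2$ with the per-key expectation bound and dividing by $p$ gives
\[
\Pr_{a,t}[V^h(t)\neq 0]=\E_a[|\GOOD^h|]/p\geq\frac{n}{2p}\cdot\E[\ell_+^{h_a^p}(x)]\geq 1/8,
\]
where the last inequality should come out to hold strictly. I expect the main obstacle to be the endgame arithmetic: unlike in Theorem~\ref{thm:2-ind}, the denominators $p$ and $p-1$ do not match cleanly, and the optimal $k_0$ need not be a half-integer, so I must be careful that rounding $k$ to the nearest integer and reconciling the $p$ versus $p-1$ factors still leaves the product at or above $1/8$ rather than just below it. Verifying that the inequality is \emph{strict} (the claim is a strict $1/8$-distinguisher) is the delicate step, and I would handle it by tracking the sign of the discarded lower-order terms explicitly rather than absorbing them into an asymptotic.
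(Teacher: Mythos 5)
Your plan is the paper's proof: same measure $\ell_+^{h_a^p}$, same truncated tail sum via Lemma~\ref{lem:prob-ax-p}, same concave-quadratic optimization, same assembly through $|\GOOD^h|\geq\sum_{x\in S}\ell_+^{h_a^p}(x)/2$. However, the one step you defer --- the ``endgame arithmetic'' --- is exactly the step that makes the theorem true, and the intermediate bound you state, $\E[\ell_+^{h_a^p}(x)]\gtrsim (p-1)/(4n)$, is \emph{not} sufficient: plugging it into your final display gives $\frac{n}{2p}\cdot\frac{p-1}{4n}=\frac{p-1}{8p}<\frac18$, so the ``lower-order corrections'' you propose to absorb are in fact load-bearing. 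The point you are missing is that the quadratic is $f(k)=k\bigl(1+n/(p-1)\bigr)-k^2n/(p-1)$, whose real maximizer is $k_0=(p-1)/(2n)+\tfrac12$; choosing the integer $k\in[k_0-\tfrac12,k_0+\tfrac12)$ and evaluating at the worst endpoint $k_0-\tfrac12=(p-1)/(2n)$ yields
\[
\E[\ell_+^{h_a^p}(x)]\;\geq\; f\bigl((p-1)/(2n)\bigr)\;=\;\frac{p-1}{4n}+\frac12\;=\;\frac{p}{4n}+\Bigl(\frac12-\frac1{4n}\Bigr)\;>\;\frac{p}{4n},
\]
and it is precisely this additive $+\tfrac12$ (coming from the shifted vertex, i.e.\ from the coefficient $1+n/(p-1)$ on the linear term) that overcomes the $p$ versus $p-1$ mismatch: $\frac{n}{2p}\cdot\frac{p}{4n}=\frac18$ exactly, with strict inequality inherited from $\frac12>\frac1{4n}$. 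Without carrying out this computation your argument lands strictly below $1/8$, so as written the proposal has a genuine gap, albeit one you correctly identified as the delicate point.

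A small secondary remark: your appeal to $h_a^p$ being a permutation to make the good intervals ``genuinely disjoint'' is not needed; the bound $|\GOOD^h|\geq\sum_{x\in S}\ell_+^{h_a^p}(x)/2$ already accounts for each interval being credited to at most two keys, which is where the factor $\tfrac12$ comes from.
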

\begin{proof} The proof is similar to that of Theorem \ref{thm:2-ind},
except that we now use $\ell_+^{h_a^p}$ and Lemma \ref{lem:prob-ax-p}.
For any given integer $k$, we get
\begin{align*}
\E[\ell_+^{h_a^p}(x)]&\geq \sum_{\delta=1}^k\Pr[\ell_+^{h_a^p}(x)\geq\delta]\geq
\sum_{\delta=1}^{k} (1-2n(\delta-1)/(p-1))\\
&=k-k(k-1)n/(p-1)= k(1+n/(p-1))-k^2n/(p-1)).
\end{align*}
The quadratic function $f:k\mapsto k(1+n/(p-1))-k^2n/(p-1))$ is concave and symmetric around
its maximum in the reals at $k_0=(p-1)/(2n)+1/2$. 
We pick the integer $k\in[k_0-1/2,k_0+1/2)$. Then
\begin{align*}
\E[\ell_+^{h_a^p}(x)]&\geq f(k)\geq f(k_0-1/2)\\
&=(p-1)/(2n)\cdot(1+n/(p-1))-((p-1)/(2n))^2n/(p-1))\\
&=(p-1)/(2n)+1/2-(p-1)/(4n)=p/(4n)-1/(4n)+1/2>p/(4n).
\end{align*}
The probability that we get distinction with a non-zero sampled sum 
$V^{h_a^p}(t)$ is thus lower bounded by
\begin{align*}
\Pr_{h,t}[V^{h_a^p}(t)\neq 0]&=\E_h[|\GOOD^{h_a^p}|]/p
\geq \E[\sum_{x\in S}\ell_+^{h_a^p}(x)]/(2p)
>n/(2p)\cdot p/(4n)=1/8.
\end{align*}
\end{proof}

\section{\texttt{Sample(x)=(a*x<=t)}}\label{sec:a*x}
We will now study a very computer friendly implementation of
$h_a^m(x)=ax\bmod m$ from equation \req{eq:ax-m}. For some integer parameter
$w$, we set $m=2^w$ and pick $a$ as a uniformly distributed odd number
in $[m]$. Since $a$ is odd, it is relatively prime to $m$ which is a
power of two, so $h_a^{2^w}$ is a permutation on $[m]$.  If $w$ represents the bit length of a standard unit, that is, if $w\in
\{8,16,32,64\}$, then we get a particularly efficient portable code in
a programming language such as C. When $t$ is also a uniform $w$-bit
number, we can the compute the sampling decision for $x$ with the full
code \texttt{a*x<=t}. An important point here is that $w$-bit
multiplication automatically discards overflow, so \texttt{a*x}
computes $ax\bmod 2^w$. We will show that this power-of-two scheme
yields a distinguisher with probability bigger than $1/8$. 

We note that the above power-of-two scheme with odd $a$
has been studied before by Dietzfelbinger et al.~\cite{dietzfel97closest}. They proved
that the scheme $h_a^{m,s}(x)=\floor{(ax\bmod 2^w)/2^s}$ mapping $[2^w]$ into
$[2^{w-s}]$ is universal in the sense that for any distinct $x,y\in [2^w]$,
$\Pr[h(x)=h(y)]\leq 2^{s-w}$. This scheme is implemented efficiently 
as \texttt{(a*x)>>s}. Our analysis here is, however, quite different from
that of Dietzfelbinger et al.~\cite{dietzfel97closest}.

By Lemma \ref{lem:cond-ell+}, for any $k$ we choose, we get
\begin{align}
\E[\ell_+^{h_a^{2^w}}(x)]&\geq \sum_{\delta=1}^k\Pr[\ell_+^{h_a^{2^w}}(x)\geq\delta]\geq
\sum_{\delta=1}^{k} (1-\sum_{z\in
  S_x}\Pr[|az|_{\bmod 2^w}<\delta])\nonumber\\ 
&=k-\sum_{z\in
  S_x}\sum_{\delta=1}^{k} \Pr[|az|_{\bmod 2^w}<\delta].\label{eq:sum}
\end{align}
In the previous section, with $m$ prime and random $a\neq [m]_+$, we
proved that $\Pr[|az|_{\bmod m}<\delta]\leq2(\delta-1)/(p-1)$ for
any $z\in[m]_+$ (cf.~the proof of Lemma \ref{lem:prob-ax-p}). However,
for our power-of-two scheme with $m=2^w$, we do not have a good
general bound on $\Pr[|az|_{\bmod m}<\delta]$ since the probability may change
a lot with $z$. It turns out, however, that by exchanging summation order, 
we can get a good bound on the sum 
$\sum_{\delta=1}^{k} \Pr[|az|_{\bmod 2^w}<\delta]$.
\begin{lemma}\label{lem:good-sum} For any given $k,z\in [2^w]_+$, and uniformly random
odd $a\in [2^w]$,
\[\sum_{\delta=1}^{k} \Pr[|az|_{\bmod 2^w}<\delta] \le 2^{2-w}\lfloor k/2 \rfloor \lceil k/2 \rceil.\]
If $k$ is even, the bound evaluates to $k^2/2^w$, and if $k$ is odd, it is 
$(k-1)(k+1)/2^w=(k^2-1)/2^w$.
\end{lemma}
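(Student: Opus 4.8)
The plan is to first determine the exact distribution of $az \bmod 2^w$ for uniformly random odd $a$, then perform the summation-order exchange hinted at in the text, and finally collapse the whole sum into the form $2^{2-w}P(k-P)$ for a nonnegative integer $P$, at which point the bound is immediate from integer concavity.

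First I would pin down the distribution. Write $z = 2^s z'$ with $z'$ odd and $0 \le s \le w-1$, which is possible since $z \in [2^w]_+$. As $a$ ranges uniformly over the $2^{w-1}$ odd residues and $z'$ is a unit modulo $2^w$, the product $az' \bmod 2^w$ is again uniform over the odd residues; multiplying by $2^s$ then shows that $az \bmod 2^w$ is uniformly distributed over the set $R_s$ of residues in $[2^w]$ whose largest dividing power of two is exactly $2^s$. Since $|R_s| = 2^{w-1-s}$, each such residue occurs with probability $2^{s+1-w}$, and because $0 \notin R_s$ we have $|az|_{\bmod 2^w} \ge 1$ always.

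Next I would exchange the order of summation. Writing $d(r) = |r|_{\bmod 2^w} = \min\{r, 2^w-r\}$, we have $\Pr[|az|_{\bmod 2^w} < \delta] = 2^{s+1-w}\,|\{r \in R_s : d(r) < \delta\}|$, so
\[
\sum_{\delta=1}^{k}\Pr[|az|_{\bmod 2^w}<\delta]
= 2^{s+1-w}\sum_{r\in R_s}\max\{0,\ k-d(r)\}.
\]
The involution $r \mapsto 2^w - r$ pairs the residues of $R_s$ into couples of equal $d$-value, with at most one fixed point (namely $r = 2^{w-1}$, occurring only when $s = w-1$), and the common $d$-value runs through the odd multiples of $2^s$. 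Hence $\sum_{r\in R_s}\max\{0,k-d(r)\} \le 2T_s$, where $T_s = \sum_{c\ \mathrm{odd},\, c\ge 1}\max\{0,\ k-2^s c\}$, with equality in the generic case. Letting $p$ be the number of odd $c \ge 1$ with $2^s c < k$, the surviving terms form an arithmetic progression, and using $\sum_{i=1}^{p}(2i-1)=p^2$ gives $T_s = pk - 2^s p^2 = p\,(k - 2^s p)$. Setting $P = 2^s p$, a nonnegative integer with $P < k$, we obtain
\[
\sum_{\delta=1}^{k}\Pr[|az|_{\bmod 2^w}<\delta]
\le 2^{s+1-w}\cdot 2T_s
= 2^{2-w}\,P\,(k-P).
\]
Since $P$ is an integer and $P \mapsto P(k-P)$ is concave with real maximum at $P = k/2$, its maximum over the integers is attained at $P = \lfloor k/2\rfloor$ (equivalently $\lceil k/2\rceil$), giving $P(k-P) \le \lfloor k/2\rfloor\lceil k/2\rceil$. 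This yields the claimed bound $2^{2-w}\lfloor k/2\rfloor\lceil k/2\rceil$, and the two closed forms follow by evaluating $\lfloor k/2\rfloor\lceil k/2\rceil$ as $k^2/4$ for even $k$ and $(k^2-1)/4$ for odd $k$.

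The main obstacle is the distributional analysis of the first step: correctly identifying $az \bmod 2^w$ as uniform over the valuation-$s$ residues and tracking the edge case $s = w-1$, where the pairing involution has the single fixed point $2^{w-1}$ so that the clean doubling $\sum_{r}\max\{0,k-d(r)\} = 2T_s$ degrades to an inequality. The virtue of the approach is that this edge case only weakens the sum, so the uniform inequality $\le 2T_s$ suffices; and once the sum collapses to $2^{2-w}P(k-P)$ with $P$ an \emph{integer}, the bound falls out of integer concavity with no remaining optimization over $z$ or $s$.
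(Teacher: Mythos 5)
Your proposal is correct and follows essentially the same route as the paper's proof: identify $az \bmod 2^w$ as uniform over the odd multiples of $2^{s}$, exchange the order of summation, collapse the double sum to an integer quadratic $P(k-P)$ (your $P=2^sp$ is exactly the paper's $J_k2^i$), and finish by integer concavity. The only difference is presentational --- you pair residues via the involution $r\mapsto 2^w-r$ and track its fixed point, which handles the regime $\delta>2^{w-1}$ slightly more carefully than the paper's doubling step, but the resulting bound is identical.
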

\begin{proof}
Let $k,z\in [2^w]_+$ be fixed, and let $i=\lsb(z)$ denote the position of the least significant set bit in
the binary representation of $z$, that is, $z=z'2^i$ for some odd
$z'$.  The multiplier $a$ is
a uniformly distributed odd $w$-bit number $a$, and this
means that $az\bmod 2^w$ is a uniformly distributed odd multiple of $2^i$, that
is, a number of the form $2^i+j2^{i+1}$ for $j\in [2^{w-i-1}]$.
There are the same number of such odd multiples in $(2^w-\delta)$ and
in $(0,\delta)$, and none in $0$, so we conclude that
\[\Pr[|az|_{\bmod 2^w}<\delta]=
2\cdot\sum_{j\,\colon\,0 < 2^i + j2^{i+1} < \delta}1/ 2^{w-i-1} = 2^{2-w}\cdot\sum_{j\,\colon\,0 < 2^i + j2^{i+1} < \delta} 2^i.
\]
Hence
\begin{align*}
2^{w-2}\cdot \sum_{\delta=1}^{k} \Pr[|az|_{\bmod 2^w}<\delta]
&= \sum_{1\le \delta \le k} \;\; \sum_{j\colon 0 < 2^i + j2^{i+1} < \delta} 2^i\\
&= \sum_{0 \le j < (k-2^i)/2^{i+1}} \;\; \sum_{\delta = 2^i + j2^{i+1}+1}^k 2^i\\
&= \sum_{0 \le j < (k-2^i)/2^{i+1}}  (k - 2^i - j2^{i+1})2^i\\
&= \sum_{0\le j < J_k} (k - 2^i - j2^{i+1})2^i,
\end{align*}
where $J_k=  \lceil(k-2^i)/2^{i+1}\rceil$. Hence
$$
2^{w-2}\cdot \sum_{\delta=1}^{k} \Pr[|az|_{\bmod 2^w}<\delta]
= J_k2^i(k - 2^i) - 2^{2i+1}J_k(J_k-1)/2 = J_k2^i(k - J_k2^i).
$$ 
The quadratic mapping $\alpha\mapsto \alpha(k-\alpha)$ is concave and 
symmetric around its maximum in the reals at $\alpha_0=k/2$. The
maximum in the integers is attained at the points $\lfloor k/2
\rfloor$ and $\lceil k/2 \rceil$ where it evaluates to $\lfloor k/2
\rfloor(k-\lfloor k/2 \rfloor)= \lfloor k/2 \rfloor\lceil k/2
\rceil$. We have $\alpha=J_k2^i=\lceil(k-2^i)/2^{i+1}\rceil 2^i$, which is indeed
an integer, and equal to $\lceil k/2 \rceil$ when $i=0$. We conclude
that
\[
2^{w-2}\cdot \sum_{\delta=1}^{k} \Pr[|az|_{\bmod 2^w}<\delta]
= J_k2^i(k - J_k2^i)\leq \lfloor k/2 \rfloor\lceil k/2
\rceil.\]
This completes the proof of Lemma \ref{lem:good-sum}.
\end{proof}

\begin{lemma}\label{lem:ell-2^w} For any $x\in S$, when $a$ is a uniform odd number in
$[2^w]$, we have
\[\E[\ell_+^{h^{2^w}_a}(x)]\geq 2^w/(4n).\]
\end{lemma}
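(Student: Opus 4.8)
The plan is to substitute the bound of Lemma \ref{lem:good-sum} into the expansion \req{eq:sum} and then optimize over $k$, exactly as in the proofs of Theorems \ref{thm:2-ind} and \ref{thm:ax-p}, but with one decisive twist: I will force $k$ to be \emph{odd}. First I would note that $|S_x|\le n$, since $S_x$ consists of the $n-1$ differences $y-x$ together with at most one extra element $x$. Plugging Lemma \ref{lem:good-sum} into \req{eq:sum} and using $|S_x|\le n$ then gives, for every positive integer $k$,
\[\E[\ell_+^{h_a^{2^w}}(x)]\ \ge\ k-n\,2^{2-w}\,\lfloor k/2\rfloor\,\lceil k/2\rceil.\]

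The crucial step is to take $k$ odd, so that $\lfloor k/2\rfloor\,\lceil k/2\rceil=(k^2-1)/4$ and the right-hand side becomes $g(k):=k-n(k^2-1)/2^w$. This is really the whole point of the argument. An even $k$ would only yield $k-nk^2/2^w$, whose maximum over the reals is \emph{exactly} $2^w/(4n)$ and is attained only at the (generally non-integer) point $k_0=2^{w-1}/n$; so an even choice leaves no slack to absorb the rounding of $k_0$ to an integer. The odd choice contributes an extra additive bonus of $+n/2^w$, and it is precisely this bonus that will pay for the rounding error.

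Next I would treat $g$ as a concave quadratic. It is symmetric about $k_0=2^{w-1}/n$, with $g(k_0)=2^w/(4n)+n/2^w$ and $g(k_0)-g(k)=(n/2^w)(k-k_0)^2$. Hence it suffices to exhibit an odd integer $k\ge 1$ with $|k-k_0|\le 1$: for such a $k$ we immediately get $g(k)\ge g(k_0)-n/2^w=2^w/(4n)$, which is the claimed bound. For existence, I would use $n\le 2^w$ (as $S\subseteq[2^w]$), so $k_0\ge 1/2$; the half-open interval $(k_0-1,\,k_0+1]$ has length $2$ and therefore contains exactly one odd integer $k$, which satisfies $|k-k_0|\le 1$ by construction, and $k>k_0-1\ge-1/2$ forces this odd $k$ to be at least $1$, so that the truncated sum $\sum_{\delta=1}^{k}$ is legitimate.

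The main obstacle is this last bookkeeping around the choice of $k$: the straightforward optimization gives only $2^w/(4n)$ with equality and no room for integrality, and the fix is the odd-$k$ device described above, whose $+n/2^w$ cushion exactly cancels the worst-case rounding penalty $(n/2^w)(k-k_0)^2\le n/2^w$. Everything else is the same concavity computation already carried out in the two preceding theorems, so I expect no further difficulty once the odd-$k$ bound $g(k)$ and the guarantee $k_0\ge1/2$ are in place.
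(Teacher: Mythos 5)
Your proposal is correct and follows essentially the same route as the paper: substitute Lemma \ref{lem:good-sum} into \req{eq:sum}, restrict to odd $k$ so the bound becomes $k-n(k^2-1)/2^w$, and pick the unique odd integer within distance $1$ of the real maximizer $k_0=2^{w-1}/n$, where the extra $+n/2^w$ from oddness exactly absorbs the rounding loss. The paper phrases the last step as evaluating $f(k_0-1)$ directly rather than via the identity $g(k_0)-g(k)=(n/2^w)(k-k_0)^2$, but the computation is identical.
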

\begin{proof}
Inserting the bound of Lemma \ref{lem:good-sum} in \req{eq:sum}, for any odd positive integer $k$, 
\[
\E[\ell_+^{h_a^{2^w}}(x)]=k-\sum_{z\in
  S_x}\sum_{\delta=1}^{k} \Pr[|az|_{\bmod 2^w}<\delta] \ge k - n \cdot (k^2-1)/{2^w}.\]
The quadratic function $f:k\mapsto k - n \cdot (k^2-1)/{2^w}$ is concave and 
symmetric around its maximum in the reals at $k_0 = 2^{w-1}/n$.
We pick $k$ as the odd integer in $[k_0-1,k_0+1)$.
Then 
\begin{align*}
\E[\ell_+^{h_a^{2^w}}(x)]&=f(k)\ge f(k_0-1) = (k_0-1) - n\cdot k_0(k_0-2)/2^w\\
&=\frac{2^{w-1}}{n}-1 - \frac{n}{2^w}\cdot\frac{2^{2(w-1)}}{n^2} + \frac{n}{2^w}\cdot \frac{2^w}{n}\\
&=\frac{2^{w}}{2n}-1 - \frac{2^w}{4n} + 1=\frac{2^w}{4n}.
\end{align*}
\end{proof}

\begin{theorem}\label{thm:ax-2^w} If $w$ is a natural number, $a$ is a uniform
odd number in $[2^w]$, and $t$ is uniform in $[2^w]$,
then $\Sample(x)=[ax \bmod 2^w \leq t]$ is
a distinguisher with probability $1/8$.
\end{theorem}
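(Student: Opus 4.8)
The plan is to follow exactly the template established in the proofs of Theorem~\ref{thm:2-ind} and Theorem~\ref{thm:ax-p}, now substituting the power-of-two machinery developed in this section. The chain of reasoning is: a lower bound on the expected length of the good interval around each key translates, via the summation $\sum_{x\in S}\ell_+^{h_a^{2^w}}(x)/2\leq|\GOOD^{h_a^{2^w}}|$, into a lower bound on $\Pr[V^{h_a^{2^w}}(t)\neq 0]$. The crucial per-key bound has already been assembled as Lemma~\ref{lem:ell-2^w}, which gives $\E[\ell_+^{h_a^{2^w}}(x)]\geq 2^w/(4n)$ for every $x\in S$. So almost all the work is done, and the theorem is essentially a one-line wrap-up.

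Concretely, I would argue as follows. Since $a$ is odd, $h_a^{2^w}$ is a permutation on $[2^w]$, so the interval decomposition and the identity $\Pr_{a,t}[V^{h_a^{2^w}}(t)\neq 0]=\E_a[|\GOOD^{h_a^{2^w}}|]/2^w$ remain valid exactly as set up at the start of Section~\ref{sec:2-indep}. Using that $\sum_{x\in S}\ell_+^{h_a^{2^w}}(x)/2$ lower bounds $|\GOOD^{h_a^{2^w}}|$, taking expectations and applying Lemma~\ref{lem:ell-2^w} to each of the $n=|S|$ keys yields
\begin{align*}
\Pr_{a,t}[V^{h_a^{2^w}}(t)\neq 0]&=\E_a[|\GOOD^{h_a^{2^w}}|]/2^w\geq \E\Big[\sum_{x\in S}\ell_+^{h_a^{2^w}}(x)\Big]/(2\cdot 2^w)\\
&\geq \frac{n}{2\cdot 2^w}\cdot\frac{2^w}{4n}=\frac{1}{8}.
\end{align*}
This establishes the claimed $1/8$ distinguishing probability.

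The part that required genuine effort is not in this final step at all but is already absorbed into Lemma~\ref{lem:good-sum} and Lemma~\ref{lem:ell-2^w}. The obstacle the section flags explicitly is that, unlike the prime case, there is no uniform per-$z$ bound on $\Pr[|az|_{\bmod 2^w}<\delta]$, because this probability depends heavily on $\lsb(z)$. The resolution, which I would not reprove here but simply cite, is to swap the order of summation over $\delta$ and $z$ so that the quantity $\sum_{\delta=1}^{k}\Pr[|az|_{\bmod 2^w}<\delta]$ is controlled as a whole by the concave bound $\lfloor k/2\rfloor\lceil k/2\rceil\,2^{2-w}$, and then to choose $k$ as the \emph{odd} integer near $k_0=2^{w-1}/n$ so that the clean form $(k^2-1)/2^w$ applies and the final quadratic optimization lands exactly on $2^w/(4n)$. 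Given Lemma~\ref{lem:ell-2^w}, the theorem itself presents no further difficulty.
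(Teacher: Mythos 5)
Your proposal is correct and follows essentially the same route as the paper's own proof: both reduce the theorem to the per-key bound of Lemma~\ref{lem:ell-2^w}, use that $\sum_{x\in S}\ell_+^{h_a^{2^w}}(x)/2$ lower bounds $|\GOOD^{h_a^{2^w}}|$, and conclude via $\Pr_{a,t}[V^{h_a^{2^w}}(t)\neq 0]=\E_a[|\GOOD^{h_a^{2^w}}|]/2^w\geq (n/(2\cdot 2^w))\cdot(2^w/(4n))=1/8$. Your framing of where the real difficulty lies (the summation swap in Lemma~\ref{lem:good-sum} and the choice of odd $k$ near $2^{w-1}/n$) matches the paper's structure exactly.
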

\begin{proof}
Recall that $\ell^{h^{2^w}_a}_+\!(x)$ is a lower bound on the size of
the good interval neighboring $h^{2^w}_a(x)$, so $\sum_{x\in S}
\ell^{h^{2^w}_a}_+\!(x)/2$ is a lower bound of the total length
$|\GOOD^{h_a^{2^w}}|$ of the good intervals. Hence, by Lemma
\ref{lem:ell-2^w},
\[\E_a[|\GOOD^{h_a^{2^w}}|]
\geq \sum_{x\in S}\E[\ell_+^{h_a^{2^w}}(x)]/2
>n (2^w/(4n))/2=2^w/8.\]
We get a distinguisher when the sampled sum $V^{h_a^{2^w}}(t)$ is non-zero. By definition, this is exactly when 
$t\in \GOOD^{h_a^{2^w}}$, which for a given $\GOOD^{h_a^{2^w}}$ happens
with probability $|\GOOD^{h_a^{2^w}}|/2^w$. Therefore
\begin{align*}
\Pr_{a,t}[V^{h_a^{2^w}}(t)\neq 0]&=\E_h[|\GOOD^{h_a^{2^w}}|]/2^w
\geq (2^w/8)/2^w=1/8.
\end{align*}
\end{proof}
Recall that when $w\in \{8,16,32,64\}$, we implement $[ax \bmod 2^w \leq t]$ in C
as \texttt{(a*x<=t)}. Theorem~\ref{thm:ax-2^w} thus
formalizes the title of this paper.

\section{Relation to AMS sketches and 4-independence with real values}\label{sec:AMS}
When we have a value function $v:U\fct\Reals$, we can
use the classic AMS sketches \cite{alon96frequency} as
a kind of distinguisher. These sketches are used to estimate the
second moment $Q=\sum_{x\in U} (v(x))^2$, which is non-zero if and only
if $v$ has a non-zero.

The AMS sketch takes
a random sign function $\Sign:U\fct \{-1,+1\}$ and 
computes the sketch
\[X=\sum_{x\in U} v(x)\Sign(x).\]
Then $Y=X^2$ is used as an estimator for $Q$. 
In \cite{alon96frequency} it is proved that $\E[X^2]=Q$ if $\Sign$ is 2-independent and that
 $\Var[X^2]<2Q$ when $Q>0$ and $\Sign$ is 4-independent. From the simple
observation below it follows that $Y$ and hence $X$ are non-zero
with probability bigger than $1/3$.
\begin{observation}\label{obs:non-zero} If $Y\geq 0$ is a random variable and $\Pr[Y=0]=p$,
then $\Var[Y]\geq \frac{p}{1-p}\E[Y]^2$. In particular, if $p\geq 2/3$
then $\Var[Y]\geq 2\E[Y]^2$.
\end{observation}
\begin{proof} The observation follows from the more general Paley-Zygmund inequality, but
for this simple case, we provide a direct proof.  Given that $\Pr[Y=0]=p$, we get the smallest
variance for a given mean $\gamma$ if $Y$ is a constant $y$ when $Y\neq 0$.
Then $\Pr[Y=y]=(1-p)$ and $\mu=(1-p)y$, so $\Var[Y]=\E[Y^2]-\gamma^2=(1-p)(\gamma/(1-p))^2-\gamma^2=\frac p{1-p}\gamma^2$.
\end{proof}
We now observe that we get the same result if we replace the 
4-independent sign-function $\Sign:U\fct\{-1,+1\}$ with a 
4-independent sample function $\Sample:U\fct[2]$, that is,
\begin{proposition}\label{prop:4-indep}
A 4-independent $\Sample:U\fct[2]$ is a distinguisher with probability $1/3$
for real valued functions. 
\end{proposition}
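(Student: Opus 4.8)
The plan is to reduce the sampling estimator to the AMS sign sketch and then apply Observation~\ref{obs:non-zero} to the non-negative variable $Z^2$, where $Z=\sum v(\Sample)=\sum_x v(x)\Sample(x)$ is the sampled sum. Write $T=\sum_{x\in U}v(x)$ and $Q=\sum_{x\in U}v(x)^2$, noting $Q>0$ since $v$ is non-zero. Setting $\Sign(x)=2\Sample(x)-1$, the function $\Sign:U\fct\{-1,+1\}$ is $4$-independent and uniform whenever $\Sample$ is, so the AMS sketch $X=\sum_x v(x)\Sign(x)$ has $\E[X^2]=Q$ and $\Var[X^2]<2Q^2$. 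The link between the two estimators is the identity $Z=\tfrac12(X+T)$, so a non-zero sampled sum is the event $X\neq -T$. This is \emph{not} the event $X\neq 0$ of the AMS conclusion, so the sign result cannot be invoked as a black box; this shift by $T$ is exactly the difficulty.

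To get around it, I would work directly with $Y=Z^2$. First I would record the low odd moments of $X$: since each $\Sign(x)$ is uniform in $\{-1,+1\}$, linearity gives $\E[X]=0$, and $3$-independence gives $\E[X^3]=0$ (every product $\Sign(x)\Sign(y)\Sign(z)$ has expectation $0$ after using $\Sign(x)^2=1$ to collapse repeated indices). Expanding $Z=\tfrac12(X+T)$ then yields
\begin{align*}
\E[Z^2]&=\tfrac14\left(\E[X^2]+T^2\right)=\tfrac14\left(Q+T^2\right),\\
\E[Z^4]&=\tfrac1{16}\left(\E[X^4]+6T^2\E[X^2]+T^4\right)=\tfrac1{16}\left(\E[X^4]+6T^2Q+T^4\right),
\end{align*}
and subtracting gives the clean expression $\Var[Z^2]=\tfrac1{16}\left(\Var[X^2]+4T^2Q\right)$.

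Plugging in the AMS bound $\Var[X^2]<2Q^2$ I would then obtain
\[
\Var[Z^2]<\frac{2Q^2+4T^2Q}{16}=\frac{Q(Q+2T^2)}{8}\le\frac{(Q+T^2)^2}{8}=2\,\E[Z^2]^2,
\]
where the middle inequality is just $(Q+T^2)^2-Q(Q+2T^2)=T^4\ge 0$. Finally, since $Y=Z^2\ge 0$ and $\E[Z^2]=(Q+T^2)/4>0$, Observation~\ref{obs:non-zero} applies in contrapositive form: if $\Pr[Z=0]=\Pr[Y=0]\ge 2/3$ we would get $\Var[Y]\ge 2\E[Y]^2$, contradicting the strict inequality above. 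Hence $\Pr[Z=0]<2/3$, i.e.\ $\Pr[\sum v(\Sample)\neq 0]>1/3$.

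The main obstacle is precisely the additive term $T$: because the sampled sum vanishes at $X=-T$ rather than at $X=0$, one cannot read off the answer from $\E[X^2]$ and $\Var[X^2]$ as in the sign case. The crux that rescues the argument is the identity $(Q+T^2)^2-Q(Q+2T^2)=T^4\ge 0$, which shows the shift by $T$ only \emph{helps}: the extra variance $4T^2Q/16$ it contributes is always dominated by the extra $T^4/8$ it adds to $2\E[Z^2]^2$. Everything else is routine moment bookkeeping, and no independence beyond the $4$-wise independence already used by AMS is required.
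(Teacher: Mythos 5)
Your proof is correct and follows essentially the same route as the paper's: both control the fourth moment of the sampled sum and then invoke Observation~\ref{obs:non-zero} to conclude $\Pr[Z\neq 0]>1/3$. The paper computes the centered moments $\E[(X-\mu)^k]$ directly --- and since $X_i-\mu_i=v(x_i)\Sign(x_i)/2$, its centering step is exactly your reduction to the sign sketch --- whereas you import the AMS bound $\Var[X^2]<2Q^2$ as a black box and absorb the mean shift via $(Q+T^2)^2-Q(Q+2T^2)=T^4\ge 0$; the two arguments coincide up to bookkeeping.
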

\begin{proof}
Consider a non-zero value function $v:U\fct \Reals$. 
Let $x_1,\ldots,x_n\in\Reals\setminus\{0\}$ be the non-zero
values in the multiset $v(U)$. Each $x_i$ gets sampled
with probability $1/2$, yielding a random variable $X_i$ with
$\E[X_i]=\mu_i=x_i/2$. Our sampled sum is $X=\sum_i X_i$ which
has mean $\mu_i=\sum_i \mu_i$. We want to argue that
$X\neq 0$ with constant probability using that the $X_i$ are 4-independent.

We are going to study the equivalent event that $X^2\neq 0$. 
We will argue that 
\begin{equation}\label{eq:var}
\Var[X^2]<2\E[X^2]^2
\end{equation}
By Observation \ref{obs:non-zero} this implies that $\Pr[X^2=0]<2/3$,
hence that  $\Pr[X\neq 0]>1/3$, as desired. 

Since $\Var[X^2]=\E[X^4]-\E[X^2]^2$, we have the following
equivalent form of \req{eq:var}:
\begin{equation}\label{eq:fourth}
\E[X^4]<3\E[X^2]^2.
\end{equation}
To prove \req{eq:fourth}, we will compute $\E[(X-\mu)^k]$ and $\E[X^k]$ for 
$k=2,3,4$. 
The calculations are straightforward because of the symmetry
$\Pr[(X_i-\mu_i)=\mu_i]=\Pr[(X_i-\mu_i)=-\mu_i]=1/2$. Therefore 
$E[(X_i-\mu_i)^k]=0$ for $k$ odd 
while $E[(X_i-\mu_i)^k]=\mu_i^k$ for $k$ even. Thus
\begin{align*}
\E[(X-\mu)^2]&=\E[X^2]-\mu^2\\
\E[(X-\mu)^2]&=\E[(\sum_i X_i-\mu_i)^2]=\sum_i \E[(X_i-\mu_i)^2]=\sum_i \mu_i^2\\
\E[X^2]&=\sum_i \mu_i^2 +\mu^2.\\
\end{align*}
\begin{align*}
\E[(X-\mu)^3]&=\E[X^3]-3\E[X^2]\mu+2\mu^3\\
\E[(X-\mu)^3]&=\E[(\sum_i X_i-\mu_i)^3]=0\\
\E[X^3]&=3\E[X^2]\mu-2\mu^3
=3(\sum_i \mu_i^2 +\mu^2)\mu-2\mu^3\\
&=3\mu\sum_i \mu_i^2 +\mu^3\\
\end{align*}
\begin{align*}
\E[(X-\mu)^4]&=\E[X^4]-4\E[X^3]\mu+6\E[X^2]\mu^2-3\mu^4\\
\E[(X-\mu)^4]&=\E[(\sum_i X_i-\mu_i)^4]=\E[\sum_i (X_i-\mu_i)^4]
+3\E[\sum_{i\neq j} (X_i-\mu_i)^2(X_j-\mu_j)^2]\\
&=\sum_i \mu_i^4+3\sum_{i\neq j} \mu_i^2\mu_j^2
=3(\sum_i \mu_i^2)^2-2\sum_i \mu_i^4.
\end{align*}
\begin{align*}
\E[X^4]&=\E[(X-\mu)^4]+4\E[X^3]\mu-6\E[X^2]\mu^2+3\mu^4\\
&=3(\sum_i \mu_i^2)^2-2\sum_i \mu_i^4+4(3\mu\sum_i \mu_i^2 +\mu^3)\mu-
6(\sum_i \mu_i^2 +\mu^2)\mu^2+3\mu^4\\
&=3(\sum_i \mu_i^2)^2-2\sum_i \mu_i^4+
12\mu^2\sum_i \mu_i^2+
4\mu^4-
6\mu^2\sum_i \mu_i^2 -6\mu^4+3\mu^4\\
&=3(\sum_i \mu_i^2)^2-2\sum_i \mu_i^4+
6\mu^2\sum_i \mu_i^2+\mu^4\\
&=2\E[X^2]^2-2\sum_i \mu_i^4-2\mu^4\\
&<2\E[X^2]^2.
\end{align*}
The last strict inequality follows because $v$ was non-zero, hence that
we have some $\mu_i\neq 0$.
This completes the proof of \req{eq:fourth}, hence of \req{eq:var}.
\end{proof}
Proposition \ref{prop:4-indep} forms an interesting
contrast to the fact that less than full independence does not suffice for
a distinguisher for values in $\Ftwo$. 

Contrasting the above result, we will now argue that
2-independence does not suffice for a distinguisher for the reals.
\begin{proposition} \label{prop:2-not}
There is a 2-independent sampling function $\Sample:[u]\fct[2]$ and a 
non-zero value function $v:[u]\fct \Reals$ such 
that $\Pr[\sum v(\Sample)\neq 1]\leq 4/u$.
\end{proposition}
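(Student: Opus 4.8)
The plan is to exhibit the textbook pairwise-independent family built from XORs of fully independent bits, and to observe that composing it with a \emph{constant} real value function makes the sampled sum almost surely equal to a fixed nonzero value, which after rescaling is $1$. Concretely, I would take $k$ fully independent uniform bits $c_1,\dots,c_k\in\{0,1\}$, index the universe by the nonempty subsets $S\subseteq[k]$ so that $u=2^k-1$, and define $\Sample(S)=\XOR_{i\in S}c_i$. To check $2$-independence, note that for distinct nonempty $S,T$ the indicator vectors $\mathbf 1_S,\mathbf 1_T\in\Ftwo^k$ are distinct and nonzero, hence linearly independent over $\Ftwo$, so the map $c\mapsto(\Sample(S),\Sample(T))$ is surjective and $(\Sample(S),\Sample(T))$ is uniform on $\{0,1\}^2$.

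For the value function I would take the constant $v(S)=2^{1-k}\neq 0$, which is non-zero as required. The key computation is to evaluate the sampled sum exactly. Writing $W=\{i:c_i=1\}$ and $w=|W|$, we have $\Sample(S)=1$ iff $|S\cap W|$ is odd, so $\sum v(\Sample)=2^{1-k}\cdot\#\{S\neq\emptyset:|S\cap W|\text{ odd}\}$. A standard count gives $\#\{S\subseteq[k]:|S\cap W|\text{ odd}\}=2^{k-1}$ whenever $w\geq 1$ (choose $S\cap([k]\setminus W)$ freely and $S\cap W$ of odd size), and $0$ when $w=0$; since $\emptyset$ has even intersection, excluding it does not change the count. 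Hence the sampled sum equals $2^{1-k}\cdot 2^{k-1}=1$ on the event $w\geq 1$, and equals $0$ only on the single event that all $c_i=0$, which has probability $2^{-k}$.

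This yields $\Pr[\sum v(\Sample)\neq 1]=2^{-k}=1/(u+1)\leq 4/u$ for $u=2^k-1$. For an arbitrary $u$ I would pad: pick $k=\floor{\log_2(u+1)}$, run the construction on $2^k-1\leq u$ keys, and assign each remaining key value $0$ together with a fresh, fully independent uniform sample bit; this preserves $2$-independence and leaves the sum unchanged, while $2^{-k}<2/u\leq 4/u$. The one genuinely load-bearing step is the exact evaluation of the sum: the point of the proposition is precisely that pairwise independence permits a non-zero $v$ whose sampled sum is a deterministic nonzero value off a vanishing-probability event, a degeneracy that the fourth-moment argument behind Proposition \ref{prop:4-indep} rules out under $4$-independence. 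Verifying $2$-independence of the XOR family is routine; the rest is the short counting identity above.
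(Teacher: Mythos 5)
Your argument is internally sound, but it proves the statement only as mis\-printed, not as intended: the ``$\neq 1$'' in the proposition is evidently a typo for ``$\neq 0$''. The sentence introducing the proposition announces that 2-independence does not suffice for a distinguisher for the reals, and the paper's own proof accordingly constructs a 2-independent $\Sample$ under which the sampled sum is \emph{zero} with probability at least $1-4/u$, concluding that ``the probability of getting a non-zero sum is thus at most $4/u$.'' Your XOR construction does the exact opposite: your sampled sum equals $1$ except with probability $2^{-k}$, so $\Pr[\sum v(\Sample)\neq 0]=1-2^{-k}$, which makes your sampler an essentially perfect distinguisher for your $v$ rather than a witness that 2-independent sampling can fail. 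No repair within your framework is possible, since the point is cancellation, which a constant-value $v$ cannot provide. The paper instead takes $u=4n$ keys, $2n$ with value $+1$ and $2n$ with value $-1$, and samples the two halves independently with the same distribution: with probability $\eps$ sample nothing, with probability $\eps$ sample everything, and otherwise sample a uniformly random \emph{balanced} subset of exactly $n$ of the $2n$ keys. Marginals are $1/2$ by symmetry, and solving $1/4=\eps+(1-2\eps)(1/2)(n-1)/(2n-1)$ gives $\eps=1/(4n)=1/u$, so every pair is sampled with probability exactly $1/4$ and the scheme is 2-independent; yet whenever both halves land in the balanced case, which happens with probability at least $1-4\eps$, the sampled sum is $n-n=0$.

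Your closing remark confirms the misreading: you assert that the fourth-moment argument behind Proposition \ref{prop:4-indep} rules out, under 4-independence, a sampled sum that is deterministically a nonzero value off a vanishing-probability event. It does not: a sum $X$ with $X=1$ almost surely satisfies $\E[X^4]<3\E[X^2]^2$ with room to spare, and Proposition \ref{prop:4-indep} only lower-bounds $\Pr[X\neq 0]$ by $1/3$, which your construction exceeds trivially. The degeneracy that 4-independence excludes, and that Proposition \ref{prop:2-not} exhibits under mere 2-independence, is the sum \emph{vanishing} with probability close to $1$, i.e., the failure of the sampler as a distinguisher. Your pairwise-independence check, your counting identity, and your padding step for general $u$ are all correct as far as they go, but they answer a different question from the one the proposition is posing, and none of that machinery contributes toward the intended claim.
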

\begin{proof}
First we describe the value function. We assume $u=4n$ for
some positive integer $n$. We have $2n$ values that are
$+1$ and $2n$ values that are $-1$. The positive and negative
values will be sampled independently, but using the same
scheme. In both
cases, for some parameter $\eps$, the distribution samples nothing 
with probability $\eps$ and all with probability $\eps$.
Otherwise make a ``balanced'' sample of exactly $n$ random values. 
If we get the balanced case for both negative and positive
values, then the sampled sum is zero, and this happens
with probability at least $1-4\eps$.

We will now fix $\eps$ so as to get a $2$-independent distribution.
Regardless of $\eps$, in the marginal distribution, each item 
is sampled with probability $1/2$. Because everything is symmetric, it now suffices
to set $\eps$ such that any 2 given items are sampled with
probability $1/4$. Thus we want
\begin{align*}
&1/4=\eps+(1-2\eps)(1/2)(n-1)/(2n-1)=\frac{\eps(4n-2)+n-1-2\eps(n-1)}{4n-2}=
\frac{n-1+2\eps n}{4n-2}\\
&\iff \eps=1/(4n)=1/u.
\end{align*}
The probability of getting a non-zero sum is thus at most $4/u$.
\end{proof}

\section{Simple tabulation}
We will now briefly discuss what can be achieved using simple
tabulation for constructing distinguishers. Simple tabulation dates
back to Zobrist \cite{zobrist70hashing}. We view a key $x\in [u] =
\{0,\ldots,u-1\}$ as a vector of $c>1$ characters
$x_0,\ldots,x_{c-1}\in \Sigma = [u^{1/c}]$. For each character
position, we initialize a fully random table $H_i$ of size $|\Sigma|$
with values from $R = [2^r]$. The hash value of a key $x$ is
calculated as
\[
    h(x) = H_0[x_0]\xor\cdots\xor H_{c-1}[x_{c-1}]\ .
\]
P\v{a}tra\c{s}cu and Thorup \cite{patrascu12charhash} analyzed simple
tabulation, showing that it works well in many algorithmic contexts.
It is now tempting to set $r=1$ and use $\Sample=h:[u]\fct[2]$ as a sampler.

We will now argue that the above sampler is not a distinguisher for
values in $\Ftwo$. Suppose $c=2$ and consider the 4 keys $aa$, $ab$,
$ba$, and $bb$. Then over $\Ftwo$, we have $\Sample(aa)+
\Sample(ab)+\Sample(ba)+\Sample(bb)=0$.  This is indeed the
standard proof that simple tabulation is not 4-independent. It
implies that if our four keys are the only ones with value $1$,
then the sampled sum is always $0$.

On the other hand, we do claim that our simple tabulation sampler
is a constant probability distinguisher over the reals. In Proposition
\ref{prop:4-indep} we proved this to be the case when $\Sample$ is
4-independent, but simple tabulation is not 4-independent.
However, in \cite{patrascu12charhash} it is proved
that the fourth moment that we studied in the proof of Proposition
\ref{prop:4-indep} deviates only by a constant
factor from that with full randomness, and all lower moments
are the same since simple tabulation is 3-independent. We can therefore
perform the same analysis as that in the proof of Proposition \ref{prop:4-indep}, and conclude that simple tabulation sampling is a constant
probability distinguisher for real values.

In the experiments with a 64-bit computer from \cite[Table
  1]{patrascu12charhash}, the universal hashing \texttt{a*x>>s} takes
2.33 ns whereas simple tabulation takes 11.40 ns. In our own
experiments, our distinguisher \texttt{a*x<=t} was only 34\% slower
than \texttt{a*x>>s}, so it should be 2-3 times faster than simple
tabulation (the relative  difference may vary a lot from computer to computer 
since \texttt{a*x<=t} depends on arithmetic operations while tabulation depends
on the speed of cache). On top of that, simple tabulation uses much more space and
it is not a general distinguisher.  However,
simple tabulation does have one interesting advantage as a
distinguisher for reals; for if we use $r>1$ output bits in $h$ and
the tables $H_i$, then they are completely independent, yielding $r$
independent distinguishers that can be used for the vector
distinguisher mentioned in Section \ref{sec:appl}. Unfortunately,
simple tabulation is not a general distinguisher, and we do not
know of a better way of generating bits for a general vector distinguisher
than to generate each one individually using our new \texttt{a*x<=t}
scheme. This is left as an interesting open problem.

\section{Other non-distinguishers}\label{sec:non-distinguisher}
We will now prove that some other tempting sampling schemes are not distinguishers.

In the introduction we considered the fastest universal hash function
\texttt{a*x>>63} mapping 64-bit keys to $[2]$. Consider applying it to a pair of keys $(x_1,x_2)$
differ in only in their most significant bit. If the multiplier is odd,
then $ax_1\bmod 2^{64}$ and $ax_2\bmod 2^{64}$ will also only differ in the
most significant bit, so \texttt{a*x>>63} will sample exactly one of
them. On the other hand, if $a$ is even, we will either sample both
$x_1$ and $x_2$, or none of them. Suppose we have one more pair $(y_1,y_2)$,
also differing in only the most significant bit. Then if $a$ is
odd, we sample one key from each pair, and if $a$ is even, we sample
either both or none from each pair. In all cases we end up with an
even number of keys. Thus, if $x_1,x_2,y_1,y_2$ are the keys with
value 1, then the sampled sum over $\Ftwo$ is always $0$.

Now note that \texttt{a*x>>63} is equivalent to \texttt{a*x>t} when
\texttt{t} is the constant $2^{63}$, and  that \texttt{a*x<=t} would
sample the complement set that also has an even number of elements.
The problem persists for schemes like \texttt{a*x+b<=t} as long as
\texttt{t}$=2^{63}$. 
It is thus no coincidence that our \texttt{a*x<=t} distinguisher relies
on \texttt{t} being random.

\section*{Acknowledgments} I would like to thank Martin Dietzfelbinger for 
suggesting some very nice simplifications of the analysis including
the change of summation order used in the proof of Lemma
\ref{lem:good-sum}.  Also, I would like to thank a very thorough
reviewer from FOCS'15 for a 6 page report with many insightful
comments. Finally, I thank Noga Alon for pointing me to the
Paley-Zygmund inequality.


\end{document}